\DeclareMathAlphabet{\mathpzc}{OT1}{pzc}{m}{it}
\newcommand{\E}  {\mbox{E}}
\newtheorem{lema}{Lemma}
\def\b1{{\mathbf 1}}
\newtheorem{theorem}{Theorem}
\titleformat*{\section}{\normalfont\fontsize{14}{17}\bfseries}
\titleformat*{\subsection}{\normalfont\fontsize{12}{15}\bfseries}
\def\Var{{\rm var}}
\def\nti{$n\rightarrow\infty$}
\def\amse{\textsc{AMSE}}
\def\beqn{\begin{eqnarray*}}
\def\eeqn{\end{eqnarray*}}
\def\beq{\begin{eqnarray}}
\def\eeq{\end{eqnarray}}
\def\ld{\ldots}
\def\bi{\begin{itemize}}
\def\ei{\end{itemize}}
\def\ra{\rightarrow}
\newcommand{\twofig}[2]{
\hbox to\hsize{\hss
    \vbox{\psfig{figure=#1,height=2.8in,width=2.4in}}\qquad
    \vbox{\psfig{figure=#2,height=2.8in,width=2.4in}}
    \hss}
}
\def\var{{\rm var}}
\def\cov{{\rm cov}}
\def\CV{ross-validation}
\theoremstyle{plain}
\newtheorem{assumption}{Assumption}
\DeclareMathOperator*{\argmin}{arg\,min}
\date{}
\begin{document}

\title{\LARGE Bagging cross-validated bandwidths with application to Big Data
\thanks{%
{This is a pre-copyedited, author-produced version of an article accepted for publication in Biometrika
following peer review. The version of record of: D Barreiro-Ures, R Cao, M Francisco-Fern\'andez, J D Hart, Bagging
cross-validated bandwidths with application to big data, Biometrika, Volume 108, Issue 4, December 2021, Pages 981--988, https://doi.org/10.1093/biomet/asaa092, published by Oxford University Press, is available online at: https://
doi.org/10.1093/biomet/asaa092. }}}
\author{Daniel Barreiro-Ures \\
Universidade da Coru\~{n}a\thanks{%
Research group MODES, CITIC, Department of Mathematics, Faculty of Computer Science, Universidade da Coru\~na, Campus de Elvi\~na s/n, 15071,
A Coru\~na, Spain}
\and
Ricardo Cao \\
Universidade da Coru\~{n}a\footnotemark[1]
\and %
Mario Francisco-Fern\'andez\\
Universidade da Coru\~{n}a\footnotemark[1]
\and
Jeffrey H. Hart \\
Texas A\&M University\thanks{Department of Statistics, Texas A\&M University, College Station TX 77843, U.S.A.}
}
\maketitle


\begin{abstract}
\cite{HR} proposed and analyzed the use of bagged
  cross-validation to choose the bandwidth of a kernel density
  estimator. They established that bagging greatly reduces the noise
  inherent in ordinary cross-validation, and hence leads to a more
  efficient bandwidth selector. The asymptotic
  theory of \cite{HR} assumes that $N$, the number of bagged
  subsamples, is $\infty$. We expand upon their
  theoretical results by allowing $N$ to be finite, as it is in
  practice. Our results indicate an important difference in the rate
  of convergence of the bagged cross-validation bandwidth for the cases
  $N=\infty$ and $N<\infty$. Simulations quantify the
  improvement in statistical 
  efficiency and computational speed that can result from using bagged
  cross-validation as opposed to a binned implementation of ordinary
  c\CV.  The performance of the
  bagged bandwidth is also illustrated on a real, very large, data
  set. Finally, a byproduct of our study
  is the correction of errors appearing in the 
  \cite{HR} expression for the asymptotic mean squared error of the
  bagging selector.  

\end{abstract}	
\textit{Keywords:} { Bagging, Bandwidth, Big data, Cross-validation, Kernel density}


\section{Introduction} \label{intro}
C\CV~is a rough-and-ready method of model selection that predates an
early exposition of the method by \cite{Stone}.  In its simplest form,
cross-validation consists of dividing one's data set into two parts,
using one part to build one or more models, and then 
predicting the data in the second part with the models so-built. In this
way, one can objectively compare the predictive ability of different
models. The leave-one-out version of c\CV~is somewhat more involved. 
It excludes one datum from the data set, fits a model from the
remaining observations, uses this model to predict the datum left out, 
and then repeats this process for all the data. 
 
While leave-one-out c\CV~is a very useful method, due in no small part
to its wide applicability, it does have its drawbacks. In the context
of smoothing parameter selection for function estimation, it has
 been regarded skeptically for many years owing to its large
variability \citep[see, e.g.][]{ParMar}. A number of modified versions
of c\CV~have been proposed in an effort to produce more stable
smoothing parameter selectors. These include partitioned
c\CV~\citep{Marron1,BhatHart}, proposals of \cite{Stute} and
\cite{FelKor}, smoothed c\CV~\citep{hmp}, one-sided
c\CV~\citep{HartYi, mmns2009}, a bagged version of
  c\CV~\citep{HR}, indirect c\CV~\citep{SHS} and DO-validation
\citep{MMNS}.

The current paper revisits the application of bagging to the selection
of a kernel density estimator's bandwidth. Given a random sample 
of size $n$ from an unknown density $f$, bagging consists of selecting
$N$ subsamples of size $m<n$, each without replacement, from the $n$
observations.  One then computes a c\CV~bandwidth from each of the $N$
subsets, averages them, and then scales the average down appropriately
to account for the fact that $m<n$. It is well-known that the
use of bagging can lead to substantial reductions in the variability
of an estimator that is nonlinear in the observations
\citep[see][]{FriedHall}. Indeed, this is true in the bandwidth
selection problem, as demonstrated in \cite{HR}.

A method closely related to bagging is
partitioned c\CV~\citep{Marron1}, wherein the data set is partitioned
into mutually exclusive subsets, and a bandwidth is computed from
each subset. One may then average these bandwidths and rescale as in
bagging. A little thought reveals that the statistical properties of
bagging and a 
replicated version of partitioned c\CV~are essentially equivalent, and
hence to fix ideas we consider only bagging in this paper. 

Two other popular methods of bandwidth selection are the plug-in
  method of \cite{SheJo} and the bootstrap \citep{cao1993}.  
  It is worth mentioning that bagged versions
of these two methodologies could also be considered. Some readers
might argue that plug-in methods are more efficient 
  than any version of cross-validation and hence should be the method
  of choice. However, \cite{Loader} challenges this notion and provides good
  reasons for not discarding cross-validatory methods.

The main contributions of our paper are as follows: 

(i) In the case $N=\infty$, we
  provide a correct expression for the 
  asymptotic mean squared error of the bagged bandwidth. The analogous
  expression given by \cite{HR} is in error. Their variance
  approximation is of too large an order, thus downplaying the actual
  reduction in 
variance that is possible with the use of bagging. In addition, we
provide an expression for the first order bias of the bagged
  bandwidth and show that the \cite{HR} 
  bias approximation is actually of smaller order in
  terms of sample size.
  The same bias error appears in the article of \cite{Marron1}. See Appendix 4.

(ii)  We provide a first order
approximation to the variance in the case where $N$ is
finite, which, of course, is the case in practice. This is important
because even if $N=n$,  
the asymptotic variance of the bagged bandwidth is of a
different order than it is when $N=\infty$. The relevance of this
result is immediate for massive data sets, since in such cases 
taking $N$ as large as $n$ can be prohibitive computationally.

(iii) We provide an automatic method to estimate the best subsample size, in the sense
of minimum mean squared error. 

(iv) Both the automatic
method and the bagged bandwidth selector have been implemented into an
R \citep{Rsoft} package, called \texttt{baggedcv}
\citep{baggedcv}, which is already available at CRAN.

\section{Methodology}
\label{sec:meth}

Let $X_1,\ld,X_n$ be a random sample from a density $f$, and consider 
estimating $f(x)$ by the kernel estimator \citep{Parzen1962,Rosenblatt1956}
$$
\hat f_h(x)=\frac{1}{nh}\sum_{i=1}^n K\left(\frac{x-X_i}{h}\right),
$$ 
where $K$ is a symmetric kernel function and $h > 0$ is the bandwidth or smoothing parameter. Making a good choice of the bandwidth is crucial to obtaining a good density estimate. An
oft-used criterion for defining a good bandwidth is based on mean
integrated squared error (MISE), defined by:
$$
M(h)=E\left[\int_{-\infty}^\infty\left\{\hat f_h(x)-f(x)\right\}^2\,{\rm d}x\right].
$$
Suppose that $f$ has two continuous derivatives. As shown by, for
example, \cite{Silverman86}, the minimizer, $h_{n0}$, of $M(h)$
with respect to $h$ is asymptotic to $h_{na}=Cn^{-1/5}$, as
$n\rightarrow\infty$, where 
\beq\label{eq:Cconst}
C=\left\{\frac{R(K)}{\mu_2(K)^2R(f'')}\right\}^{1/5},
\eeq
$R(g)=\int g^2(x)\,{\rm d}x$ and $\mu_j(g) = \int x^jg(x)\, {\rm d}x$ ($j=0,1, \ldots$), 
provided that these integrals exist finite. Ideally, one would use
$h_{n0}$ as a bandwidth in the estimator 
$\hat f_h$, but of course $h_{n0}$ depends on $f$ and so this is not
feasible. A means of estimating $h_{n0}$ is based on c\CV.

The leave-one-out c\CV~criterion can be written as:
$$
CV(h)=\int_{-\infty}^\infty\hat f_h(x)^2\,{\rm d}x-\frac{2}{n}\sum_{i=1}^n \hat
f_h^i(X_i),\quad h>0,
$$
where $\hat f_h^i$ is a kernel estimate computed with the $n-1$
observations other than $X_i$. It is easily shown that, for any $h>0$,
$CV(h)$ is an unbiased estimator of $M(h)-R(f)$. It seems 
natural then to estimate $h_{n0}$ by $\hat h_n$, the minimizer of
$CV(h)$. \cite{HallMarron} show that 
\beq\label{cvrate}
n^{1/10}\left(\frac{\hat h_n-h_{n0}}{h_{n0}}\right) \to Z
\eeq
in distribution, where $Z$ is normally distributed with mean 0. The good news here is
that the relative error $(\hat h-h_{n0})/h_{n0}$ converges to 0 in
probability, as \nti. The bad news is that the rate of
convergence is very slow, $n^{-1/10}$, which confirms the large
variability of cross-validation alluded to in the introduction. 

We now explain how bagging may be applied in the c\CV~context. 
A random sample $X_1^*,\ld,X_m^*$ is drawn without
replacement from 
$X_1,\ld,X_n$, where $m<n$. This subsample is used to calculate a
least squares cross-validation bandwidth $\hat h_m$. A rescaled version of $\hat h_m$, $\tilde 
h_m=(m/n)^{1/5}\hat h_m$, is a feasible estimator of the optimal MISE
bandwidth, $h_{n0}$, for $\hat f_h$. Bagging consists of repeating
the resampling 
independently $N$ times, leading to $N$ rescaled bandwidths $\tilde
h_{m,1},\ld,\tilde h_{m,N}$. The bagging bandwidth is then defined to be
\beq\label{eq:hbag}
\hat h(m,N)=\frac{1}{N}\sum_{i=1}^N\tilde h_{m,i}.
\eeq
This approach was first proposed and studied by \cite{HR}.

It is worth mentioning that an alternative approach is to apply
bagging to the cross-validation curves, wherein one averages the
cross-validation curves from $N$ independent resamples of size $m$,
finds the minimizer of the average curve, and then rescales the
minimizer as before. The asymptotic properties of the two approaches
are equivalent, but we prefer bagging the bandwidths since doing so
requires less communication between resamples.

\section{Asymptotic results}
\label{sec:asym}
In this section, we provide asymptotic expressions for the bias and
variance of the bagging bandwidth (\ref{eq:hbag}). 
\cite{HR} studied this selector only in the case
$N=\infty$. We find that the expression they give for the variance of
(\ref{eq:hbag}), at $N=\infty$, is in error. We provide a correct
expression for this variance, and, more importantly, study the case of
finite $N$, since there is an important interplay between the values of
$m$ and $N$. Of course, in practice it is not possible to use
$N=\infty$, and indeed there is a computational motivation for
limiting the size of $N$. We will show that if $N$ is, for example, of
order $n$, then the rate of convergence of the variance to 0 is
different than in the case $N=\infty$. This is a new result that does 
not arise from the method of proof used in \cite{HR}.


Obviously $E\{ \hat h(m,N)\}= E\{(m/n)^{1/5}\hat h_m\}$, and hence it
suffices to know the bias of $(m/n)^{1/5}\hat h_m$ as an estimator 
of $h_{n0}$. We have 
\beqn
E\left\{(m/n)^{1/5}\hat h_m\right\}-h_{n0}=B_{\rm rescale}(m,n)+(m/n)^{1/5}B_{\rm CV}(m),
\eeqn
where
$$
B_{\rm rescale}(m,n)=(m/n)^{1/5}h_{m0}-h_{n0}\quad{\rm and}\quad B_{\rm CV}(m)=E(\hat
  h_m)-h_{m0}.
$$
The rescaling bias, $B_{\rm rescale}(m,n)$, is
well-understood. \cite{Marron1} shows that  
$$
B_{\rm rescale}(m,n)=\mu_{\rm rescale}m^{-2/5}n^{-1/5}+o\left(m^{-2/5}n^{-1/5}\right),
$$
where 
$$
\mu_{\rm rescale}=\frac{R(K)^{3/5}R(f''')\mu_4(K)}{20R(f'')^{8/5}}. 
$$
\cite{HR} also provide an expression for $B_{\rm rescale}(m,n)$, 
although their rate is in error.

The other bias component, $B_{\rm CV}$, is the bias inherent to
c\CV~itself, and has a curious history in the literature. In
establishing (\ref{cvrate}), \cite{HallMarron} write 
\beq
\hat h_n-h_{n0}=\xi_n+e_n,
\label{marron}
\eeq
where $E(\xi_n)=0$ and $e_n=o_p(\xi_n)$, and hence $B_{\rm CV}(n)$ is
lost in the term $e_n$. Doing so is acceptable in the case of ordinary
c\CV~because of the fact that $\var(\xi_n)$ is so large. 
In the case of bagging, however, when $\var\{\hat h(m,N)\}$ becomes sufficiently
small, one should no longer ignore $B_{\rm CV}(m)$, although this
seems to be what both \cite{Marron1} and \cite{HR} did.

In Appendix 1 as part of the proof of the main
theorem stated below, we prove that $n^{2/5}e_n$ 
converges in distribution to a random variable with mean
\beq
\mu_{\rm CV}=- \frac{8R(f)\int V(u) W(u)
  du}{25R(K)^{8/5}R(f'')^{2/5}},  
  \label{mucv}
\eeq
where $V$ and $W$ are functions determined completely by $K$. For example,
$\int V(u) W(u)du=0.1431285$ in the case of the standard normal
kernel.

The following assumptions are made in order to prove Theorem \ref{th}:
\begin{assumption}
	\label{A1} As $m,n \to \infty$, $m = o(n)$ and $N$ tends to a
        positive constant or $\infty$. 
\end{assumption}
\begin{assumption}
	\label{A2} $K$ is a symmetric and twice differentiable density
        function having, without loss of generality, variance
        $\mu_2(K)=1$. 
\end{assumption}
\begin{assumption}
	\label{A3} As $u \to \infty$, both $K(u)$ and $K'(u)$ are
        $o\{\exp(-a_1u^{a_2})\}$ for positive constants $a_1$
        and $a_2$. 
\end{assumption}
\begin{assumption}
	\label{A4} The first three derivatives of $f$ exist and are
        bounded and continuous. 
\end{assumption}


\begin{theorem}\label{th}
	Under Assumptions \ref{A1}--\ref{A4}, the bias of the bagged
        bandwidth (\ref{eq:hbag}) is: 
	\beq\label{eq:bias}
	E\left\{\hat h(m,N)\right\}-h_{n0} = m^{-1/5}n^{-1/5}\left(\mu_{\rm CV}+\mu_{\rm
		rescale}m^{-1/5}\right)+o\left(m^{-1/5}n^{-1/5}\right)
			\eeq
	and its variance is:
	\beq\label{Nvar}
	\var\left\{\hat h(m,N)\right\} &=& AC^2m^{-1/5}n^{-2/5}\left\{\frac{1}{N}+\left(\frac{m}{n}\right)^2\right\} \\ \nonumber&+& o\left(\frac{m^{-1/5}n^{-2/5}}{N}+m^{9/5}n^{-12/5}\right),
	\eeq
	where $A$ and $C$ are constants defined by (\ref{eq:Aconst}) in Appendix 1 and \eqref{eq:Cconst}, respectively. 
\end{theorem}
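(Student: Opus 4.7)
The plan is to exploit that $\hat h(m,N) = N^{-1}\sum_{i=1}^N \tilde h_{m,i}$ is an average of $N$ identically distributed but dependent rescaled bandwidths, with the dependence arising solely from the shared underlying data $\cS = \{X_1,\ldots,X_n\}$. Both the bias and variance will reduce to one-subsample quantities through either identical distribution (for the bias) or the law of total variance (for the variance).

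For the bias, since the $\tilde h_{m,i}$ share the same marginal distribution, $E\{\hat h(m,N)\} = E\{\tilde h_{m,1}\}$. Writing $E\{\tilde h_{m,1}\} - h_{n0} = B_{\rm rescale}(m,n) + (m/n)^{1/5}B_{\rm CV}(m)$ as in Section \ref{sec:meth}, I would invoke the result of \cite{Marron1} for the rescaling bias and work out $B_{\rm CV}(m) = E(\hat h_m) - h_{m0}$ by expanding the first-order condition $CV'_m(\hat h_m) = 0$ to second order around $h_{m0}$. This yields a Bahadur-type representation $\hat h_m - h_{m0} = -CV'_m(h_{m0})/M''_m(h_{m0}) + R_m$, with $R_m$ quadratic in $\hat h_m - h_{m0}$. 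Since $E\{CV'_m(h_{m0})\} = M'_m(h_{m0}) = 0$, the bias of $\hat h_m$ is driven by the covariance of $CV'_m(h_{m0})$ with $CV''_m(h_{m0})$ and by $E\{(\hat h_m - h_{m0})^2 CV'''_m(h^*)\}$. A careful computation of these second-order terms, via the U-statistic representation of $CV_m$ and its derivatives, produces the constant $\mu_{\rm CV}$ in (\ref{mucv}) and the rate $m^{-2/5}$ for $B_{\rm CV}(m)$. Multiplying by $(m/n)^{1/5}$ and adding $B_{\rm rescale}(m,n)$ yields (\ref{eq:bias}).

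For the variance, I will apply the law of total variance conditional on $\cS$. Since the $N$ subsamples are drawn independently given $\cS$, the rescaled bandwidths are conditionally i.i.d., so
\beqn
\var\{\hat h(m,N)\} = \frac{1}{N}\var\{\tilde h_{m,1}\} + \left(1 - \frac{1}{N}\right)\var\{E(\tilde h_{m,1}\mid\cS)\}.
\eeqn
The marginal variance follows from (\ref{cvrate}): $\var\{\tilde h_{m,1}\} = (m/n)^{2/5}\var(\hat h_m) \sim AC^2 m^{-1/5}n^{-2/5}$, explaining the $1/N$ contribution. For the between-data variance $\var\{E(\tilde h_{m,1}\mid\cS)\}$, I substitute the Bahadur linearization inside the conditional expectation. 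The key observation is that, since the subsample is drawn uniformly without replacement, $E\{CV'_m(h_{m0})\mid\cS\}$ is, up to constants, a degree-two U-statistic on $\cS$ with the same pair kernel as the one appearing in $CV'_m$. A Hoeffding decomposition of this U-statistic, together with the first-order degeneracy of the CV criterion at $h_{m0}$, shows that the leading contribution is of order $(m/n)^2$ times $\var\{\tilde h_{m,1}\}$, giving the $(m/n)^2$ term in (\ref{Nvar}).

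The main obstacle is justifying the replacement of $\tilde h_{m,1}$ by its Bahadur linearization inside the conditional expectation, uniformly over $\cS$, and showing that the Bahadur remainder contributes negligibly to $\var\{E(\tilde h_{m,1}\mid\cS)\}$. This demands $L^2$ moment bounds on $R_m$, obtained via the exponential tail decay of $K$ and $K'$ (Assumption \ref{A3}) combined with the boundedness of $f$ and its first three derivatives (Assumption \ref{A4}). A second subtlety is that the first-order Hoeffding component $\zeta_1^{(m)}$ of the relevant U-statistic vanishes at higher order in $h$ than the second-order component $\zeta_2^{(m)}$, due to the moment cancellations $\int u^k\, \bar\psi(u)\,du = 0$ for $k=0,1,2,3$ with $\bar\psi$ the symmetrized derivative kernel built from $K*K - 2K$; it is exactly this cancellation that produces the $(m/n)^2$ rate rather than the naive $m/n$ one.
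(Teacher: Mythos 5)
Your proposal is correct and takes essentially the same route as the paper: your law-of-total-variance decomposition is algebraically identical to the paper's split $\var\{\hat h(m,N)\}=N^{-1}\var(\tilde h_{m,1})+(1-N^{-1})\cov(\tilde h_{m,1},\tilde h_{m,2})$, since $\cov(\tilde h_{m,1},\tilde h_{m,2})=\var\{E(\tilde h_{m,1}\mid\cS)\}$, and your expansion of the first-order condition $CV'_m(\hat h_m)=0$, with the bias coming from the $CV'$--$CV''$ cross term (the $CV'''$ remainder being negligible, which the paper handles in Lemma \ref{lemma1}), is exactly the paper's derivation of $\mu_{\rm CV}$, combined as in the paper with \cite{Marron1} for the rescaling bias. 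The only real difference is sourcing: the paper imports the marginal variance from \cite{HallMarron} and the $(m/n)^2$ covariance factor from \cite{BhatHart} by citation, whereas you sketch that covariance yourself via the conditional-mean $U$-statistic over the full sample and its Hoeffding decomposition, which is the same degeneracy/moment-cancellation structure the paper works out in Appendix 4 when correcting \cite{HR}.
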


Expression (\ref{eq:hbag}) implies that at $N=\infty$ the asymptotic
  variance of the 
bagged bandwidth is completely determined by the covariance between
bandwidths for two different resamples. Furthermore,
to first order, as derived in \cite{BhatHart}, the 
correlation between bagged bandwidths 
from different resamples is independent of $f$ and equal to
$(m/n)^2$. This correlation is smaller when
$m$ is smaller, which is due to the fact that two resamples will usually
have fewer data values in common when $m$ is smaller.  
In fact, taking $N=\infty$ yields the approximation 
\begin{eqnarray}\label{Ninf}
\var\left\{\hat h(m,N)\right\}=AC^2m^{9/5}n^{-12/5}+o\left(m^{9/5}n^{-12/5}\right),
\end{eqnarray}
which matches precisely one of the two summands in expression (13) of
\cite{HR}. It can be shown that the other summand, rather than being
the dominant term, as claimed by \cite{HR},  is actually
negligible in comparison to (\ref{Ninf}).

It is easily verified that the choice of $m$ that
minimizes the main term of (\ref{Nvar}) is asymptotic to $n/(3\sqrt
N)$. Therefore, if 
$N=n$, say, then the fastest rate at which $\var\{\hat
h(m,N)\}/h_{n0}^2$ can converge to 0 is $n^{-11/10}$. In contrast, when
$N=\infty$, the rate of convergence of $\var\{\hat
h(m,N)\}/h_{n0}^2$ can be arbitrarily close to $n^{-2}$ by allowing $m$ to
increase sufficiently slowly with $n$. This makes it clear that the
properties of the bagged bandwidth are substantially affected by how many
subsamples are taken, and hence it does not suffice to analyze
the bagged bandwidth by setting $N=\infty$.

It is remarkable how much stability bagging can provide. Whether $N$ is
$\infty$ or merely tending to $\infty$, $\var\{\hat h(m,N)/h_{n0}\}$ can
converge to 0 faster than the usual parametric rate of $n^{-1}$. This
is in stark contrast to the extremely slow rate of $n^{-1/5}$ for
ordinary c\CV. Unfortunately, this extreme stability cannot be fully
taken advantage of since the bagged bandwidth is more biased than the
ordinary c\CV~bandwidth. The largest reductions in variance are 
associated with small values of $m$, but it turns out that small $m$
yields the largest bias. 

As seen in \eqref{eq:bias}, the bias term $B_{\rm CV}$,
that has been ignored to date, is of a larger order than the
rescaling bias. This and the fact that $\mu_{\rm CV}<0$ suggest that
the bagged bandwidth would tend to be smaller than the optimal
bandwidth $h_{n0}$. However, our experience in numerous simulations is
that the 
bagged bandwidth actually tends to be larger than $h_{n0}$. The
explanation for this phenomenon is simple: $\mu_{\rm rescale}>0$ and
$\mu_{\rm rescale}$ is larger than $|\mu_{CV}|$ in every case we have
checked. Indeed, we have not found a case where  $\mu_{\rm
  rescale}/|\mu_{CV}|$ is less than 2, and it appears that there is no 
limit to how large this ratio can be.  

Table \ref{tab1} provides the constants $\mu_{\rm
  rescale}$ and $\mu_{CV}$ for several densities. Two patterns are
apparent here: (i) the heavier the tail of the density, the
more dominant is the rescaling bias, and (ii) the rescaling bias is more
dominant for multimodal mixtures of normals than for the normal 
itself. Define $m_{\rm crit}$ to be the smallest subsample size at
which the asymptotic mean of the bagged bandwidth is not larger than
the optimal MISE bandwidth, $h_{n0}$. Since the ratio $\mu_{\rm
  rescale}/\mu_{CV}$ is invariant to location and scale, it
follows that the values of $m_{\rm crit}$ for any normal, logistic or
Cauchy 
distribution are the same as in Table \ref{tab1}. Except in the case
of the Beta$(5,5)$ and normal densities, the values of $m_{\rm crit}$
are very large, especially considering
that a good choice for $m$ is usually much smaller than $n$, as we shall subsequently see. So, in
spite of what the asymptotics suggest, it will often be the case that
the bagged bandwidth is larger on average than the optimal
bandwidth. This is a classic case of asymptotics not kicking in
until the sample size is extremely large.

\begin{table}[h]
\begin{center}
\begin{threeparttable}
	\def~{\hphantom{0}}
	\caption{Bias constants and critical $m$ ($m_{\rm crit}$) for the Gaussian
		kernel. The claw density \citep{MaWa} is a
		symmetric density mixture of six normals with five modes (see Section \ref{sec:sim}, for the definition of the claw density). The bimodal mixture of two normals has parameters $\mu = (-1.5, 1.5)$, $\sigma = (0.5, 0.5)$ and $w = (0.5, 0.5)$, where $\mu$, $\sigma$ and $w$ are the mean, standard deviation and weight vectors, respectively, for the density mixture (see Section \ref{sec:sim}, for the notation used for a normal mixture density).	\label{tab1}}
		\begin{tabular}{lccc}
		\hline
			Density & $\mu_{\rm rescale}$ & $\mu_{\rm CV}$ & $m_{\rm crit}$ \\
			\hline
			Beta$(5,5)$ & $0.06554$ & $-0.03070$ & $45$ \\
			Standard normal & $0.44565$ & $-0.18216$ & $88$ \\
			Standard logistic & $0.92556$ & $-0.25787$ & $596$ \\
			Bimodal mixture of two normals &$0.32809$ & $-0.05988$ & $4936$\\
			Standard Cauchy & $1.24349$ & $-0.09793$ & $330,154$ \\
			Claw & $0.22774$ & $-0.00766$ & $>10^7$\\
			\hline
		\end{tabular}
		\begin{tablenotes}
	\item The bimodal mixture of two normals has parameters $\mu = (-1.5, 1.5)$, $\sigma = (0.5, 0.5)$ and $w = (0.5, 0.5)$, where $\mu$, $\sigma$ and $w$ are the mean, standard deviation and weight vectors, respectively, for the density mixture (see Section \ref{sec:sim}, for the notation used for a normal mixture density).
	\end{tablenotes}
 \end{threeparttable}
 \end{center}
\end{table}


\section{Choosing an optimal subsample size}

\label{sec:choos}
In practice, for fixed $n$ and $N$, our results allow one to
  estimate an optimal subsample size, $m_0$. This quantity
  is defined to be the minimizer of the asymptotic mean squared
  error (\amse) of $\hat{h}(m,N)$ with respect to $m$: 
\beq
\amse \left\{\hat{h}(m,N)\right\} &=& AC^2m^{-1/5}n^{-2/5}\left\{\frac{1}{N}+\left(\frac{m}{n}\right)^2\right\}\nonumber\\ &+& m^{-2/5}n^{-2/5}\left(\mu_{CV}+\mu_{rescale}m^{-1/5}\right)^2.
\label{amse}
\eeq

Since $\mu_{rescale}$, $\mu_{CV}$, $A$ and $C$ are unknown, we propose
the following method to estimate $$m_0 =
\argmin\limits_{m>1}\amse\left\{\hat{h}(m,N)\right\}.$$ 

\begin{enumerate}
\item Consider $s$ subsamples of size $r < n$, drawn without replacement from the original sample of size $n$.
\item For each of these subsamples, fit a normal mixture model. 
To fit a mixture model with a given number of components, use the expectation-maximization algorithm initialized by hierarchical model-based agglomerative clustering.  Then, estimate the optimal number of mixture components by using BIC, the Bayesian information criterion. 
In practice, this process is performed employing the R package \texttt{mclust} \citep[see][]{mclust}.
\item Use $R(\hat{f}_i)$, $R(\hat{f}_i'')$ and $R(\hat{f}_i''')$ to estimate $A$, $C$, $\mu_{CV}$ and $\mu_{rescale}$, where $\hat{f}_i$ denotes the density function of the normal mixture fitted to the $i$th subsample. Denote these estimates by $\hat A_i$, $\hat C_i$, $\hat \mu_{CV, i}$ and $\hat \mu_{rescale, i}$.
\item Compute the bagged estimates of the unknown constants,
  that is, $\hat D = \frac{1}{s}\sum\limits_{i=1}^s \hat D_i$, 
where $\hat D_i$ can be $\hat A_i$, $\hat C_i$, $\hat \mu_{CV, i}$ or
$\hat \mu_{rescale, i}$, and obtain $\widehat{\amse}\{\hat{h}(m,N)\}$ by
plugging these bagged estimates into (\ref{amse}). 
\item Finally, estimate $m_0$ by: $$\hat m_0 =
  \argmin\limits_{m>1}\widehat{\amse}\left\{\hat{h}(m,N)\right\}.$$ 
\end{enumerate}

Regarding the selection of $s$ and $r$ in Step 1, we have performed
some empirical tests and observed that the estimation of $h_{n0}$ by
$\hat h(\hat m_0, N)$ is quite robust to the values of these
parameters. For example, values of $s \simeq  50$  and $r \simeq 0.01
n$ have provided, in general, good results. 

\section{Discussion}
\label{sec:concl}

The finite sample behaviour of a bagged
cross-validation bandwidth was investigated by means of a
simulation study, and its practical performance was
illustrated using a large data set involving flight delays. These experiments, included in Appendixes 2 and 3, show that subsampling can significantly reduce
computing time relative to a binned version of leave-one-out
cross-validation.

As mentioned in Section \ref{intro}, bagged versions
of other bandwidth selection methodologies, such as plug-in and
bootstrap, could be considered. While both cross-validation and
bootstrap approaches try to estimate $h_{n0}$, plug-in bandwidths are
estimators of $h_{na}$, the bandwidth minimizing the asymptotic
  MISE, and hence they only need to estimate $R(f'')$. It is worth
noting that there is a clear similarity between the three
methods. Both cross-validation \citep{scottterrell} and bootstrap
\citep{cao1993} bandwidths are minimizers of criteria of the form:
\begin{equation}\label{eq:criterion}
\sum\limits_{(i,j) \in \cal{I}} H_{nhg}(X_i-X_j)+\frac{R(K)}{nh},
\end{equation}
where ${\cal I} \subset \{1,\ldots,n\} \times \{1,\ldots,n\}$ and
$H_{nhg}$ is a function that may depend on the sample size, $n$, the
bandwidth, $h$, and a pilot bandwidth, $g$. Note that $g$ plays a role
only in the bootstrap criterion. Although plug-in bandwidths are not
solutions to a minimization problem, the nonparametric estimation of
$R(f'')$ using pilot bandwidth $g$ requires working with a
$U$-statistic like the one given in the first term in
\eqref{eq:criterion}, which would only depend on $n$ and $g$. Due to
the nonlinearity of \eqref{eq:criterion} with respect to the
observations, it stands to reason that a bagged implementation of
these methods could reduce their variability, as in the case of
cross-validation.

 \section*{Acknowledgements}\label{acknowledgements}
The authors thank Andrew Robinson, an anonymous referee,
  the Editor and an Associate Editor for numerous useful
  comments that significantly improved this article. The authors are
  also grateful for the insight of Professor Anirban Bhattacharya, who worked
  with Professor Hart on partitioned cross-validation, a method
  closely related to bagged c\CV. 

This research has been supported by MINECO Grant MTM2017-82724-R, and
by the Xunta de Galicia (Grupos de Referencia Competitiva
ED431C-2016-015 and ED431C-2020-14, and Centro Singular de Investigaci\'on de Galicia
ED431G 2019/01), for the first three authors, all of them through the
ERDF. Additionally, the work of the first author was carried out
during a visit at Texas A\&M University, College Station, financed by
INDITEX, with reference INDITEX-UDC 2019.

\vspace*{-10pt}

\appendix

\section*{Appendix 1. Theoretical results} \label{sec:teo}

This Appendix includes the proof of Theorem \ref{th}, providing the asymptotic bias and variance of our bagged cross-validation bandwidth.

To prove Theorem \ref{th}, we establish one lemma in advance. 
\begin{lema} \label{lemma1}
	Under Assumptions \ref{A1}--\ref{A4},
	\beq
	n^{1/5}CV'''(\tilde h_n) = o_P(1),
	\label{conj}
	\eeq
	where $\tilde h_n$ is a bandwidth between the cross-validation bandwidth 	$\hat h_n$ and the {\rm MISE} minimizer $h_{n0}$.
\end{lema}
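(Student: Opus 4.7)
The plan is to decompose $CV'''(\tilde h_n) = M'''(\tilde h_n) + \{CV'''(\tilde h_n) - M'''(\tilde h_n)\}$, using that $E\{CV(h)\} = M(h) - R(f)$ implies $E\{CV'''(h)\} = M'''(h)$, and to show that each piece is $o_P(n^{-1/5})$. The bandwidth $\tilde h_n$ satisfies $\tilde h_n - h_{n0} = O_P(n^{-3/10})$, which follows from \eqref{cvrate} combined with the assumption that $\tilde h_n$ lies between $\hat h_n$ and $h_{n0}$.

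\emph{Mean part.} Under Assumption~\ref{A4}, I would expand the MISE as $M(h) = R(K)/(nh) + h^4 \mu_2(K)^2 R(f'')/4 + O(h^6) + O(h/n)$. Differentiating three times in $h$ and invoking the defining identity $h_{n0}^5 = R(K)/[n \mu_2(K)^2 R(f'')]$, the two leading contributions to $M'''(h)$, namely $-6R(K)/(nh^4)$ and $6h\mu_2(K)^2 R(f'')$, cancel exactly at $h = h_{n0}$, so $M'''(h_{n0}) = O(h_{n0}^3) = O(n^{-3/5})$, while $M''''(h_{n0}) = O(1)$. A first-order Taylor expansion around $h_{n0}$ then yields $M'''(\tilde h_n) = O_P(n^{-3/10})$, hence $n^{1/5} M'''(\tilde h_n) = o_P(1)$.

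\emph{Random part at $h = h_{n0}$.} Next, write $CV(h) = (K*K)(0)/(nh) + T(h)$, where $T(h)$ is a linear combination of two U-statistics with kernels $h^{-1}(K*K)((X_i-X_j)/h)$ and $h^{-1} K((X_i-X_j)/h)$. A direct calculation gives
\[
\frac{\partial^3}{\partial h^3}\{h^{-1} k(z/h)\} = -h^{-4}\psi_3(z/h), \qquad \psi_3(u) = 6k(u) + 18 u k'(u) + 9 u^2 k''(u) + u^3 k'''(u).
\]
Symmetry of $K$ forces $\psi_3$ to be symmetric, and differentiating the identity $\int h^{-1} k(z/h)\,dz = \int k$ three times in $h$ gives $\int \psi_3 = 0$. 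Consequently the Hoeffding projection $E\{h^{-4}\psi_3((X-Y)/h) \mid X\}$ is of order $h^{-1}$ (driven by the term $f''(X)\mu_2(\psi_3)$), so the first-order variance in the Hoeffding decomposition is $\sigma_1^2 = O(h^{-2})$, and a standard change-of-variables bound gives the second-order variance $\sigma_2^2 = O(h^{-7})$. Therefore
\[
\var\{CV'''(h_{n0})\} = O(n^{-1} h_{n0}^{-2}) + O(n^{-2} h_{n0}^{-7}) = O(n^{-3/5}),
\]
and $n^{1/5}\{CV'''(h_{n0}) - M'''(h_{n0})\} = O_P(n^{-1/10}) = o_P(1)$.

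\emph{Passage from $h_{n0}$ to $\tilde h_n$ and main obstacle.} By the mean value theorem,
\[
\left|CV'''(\tilde h_n) - CV'''(h_{n0})\right| \le \sup_{h \in B_n}\left|CV''''(h)\right| \cdot |\tilde h_n - h_{n0}|,
\]
for a shrinking deterministic neighborhood $B_n$ of $h_{n0}$ containing $\tilde h_n$ with probability tending to one. Each additional differentiation in $h$ introduces an extra factor of order $h^{-1}$, so a parallel variance computation shows $CV''''(h) = O_P(n^{-1/10})$ pointwise in $B_n$; combined with $|\tilde h_n - h_{n0}| = O_P(n^{-3/10})$, this yields $n^{1/5}|CV'''(\tilde h_n) - CV'''(h_{n0})| = O_P(n^{-1/5}) = o_P(1)$. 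The main obstacle will be upgrading the pointwise variance bound on $CV''''$ to a uniform-in-$h$ bound over $B_n$; the cleanest route is a chaining argument for the U-statistic process $h \mapsto CV''''(h)$, whose increments are controlled by the tail integrability of $K$ and its derivatives guaranteed by Assumption~\ref{A3}.
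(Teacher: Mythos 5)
Your overall architecture coincides with the paper's proof: split $n^{1/5}CV'''(\tilde h_n)$ into the value at $h_{n0}$ (handled by $M'''(h_{n0})=O(n^{-3/5})$ via the cancellation at the asymptotically optimal bandwidth, plus a variance bound for $CV'''(h_{n0})$) and the increment, which is controlled through $CV^{(4)}$ and $\tilde h_n-h_{n0}=O_P(n^{-3/10})$. However, two intermediate orders you assert are wrong, even though the slack in the argument happens to absorb both. First, the Hoeffding projection of the $\psi_3$-kernel is not of order $h^{-1}$: integration by parts gives $\mu_2(\psi_3)=0$ as well as $\mu_0(\psi_3)=0$ (the paper computes $\mu_0(\tilde\Psi_3)=\mu_2(\tilde\Psi_3)=0$), so the projection is $O(h)$, the non-degenerate part of the variance is $O(h^2/n)$, and your stated bound ${\rm var}\{CV'''(h_{n0})\}=O(n^{-3/5})$ is correct only because the degenerate part $O(n^{-2}h^{-7})$ dominates. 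Second, $CV^{(4)}(h)$ is not $O_P(n^{-1/10})$: its mean is $M^{(4)}(h)\sim 6\mu_2(K)^2R(f'')+24R(K)n^{-1}h^{-5}$, which tends to a strictly positive constant at $h\asymp n^{-1/5}$, so $CV^{(4)}(h)=O_P(1)$; the chain $n^{1/5}\cdot O_P(1)\cdot O_P(n^{-3/10})=o_P(1)$ still closes the argument, but your displayed rate $O_P(n^{-1/5})$ is incorrect and suggests the mean term was dropped.

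The genuine gap is the step you yourself flag: the mean value theorem requires control of $\sup|CV^{(4)}(h)|$ over the random interval between $\tilde h_n$ and $h_{n0}$, and you only gesture at a chaining argument for the U-statistic process without carrying it out; controlling increments of the degenerate part of $CV^{(4)}$ is not a routine step. The paper avoids chaining altogether: it assumes, without loss of generality, that $CV$ is minimized over a grid $I_n$ of $n^{2/5-d}$ equally spaced points in $(an^{-1/5},a^{-1}n^{-1/5})$, with $0<d<1/5$ (the induced error $h_n^*-h_{n0}=O(n^{-3/5+d})$ being negligible at the required resolution), and then combines a union bound with Chebyshev's inequality, using the uniform bounds $\max_{h\in I_n}E\{CV^{(4)}(h)\}=O(1)$ and $\max_{h\in I_n}{\rm var}\{CV^{(4)}(h)\}=O(n^{-2}h^{-9})=O(n^{-1/5})$, to conclude $n^{-1/10}\max_{h\in I_n}|CV^{(4)}(h)|=o_P(1)$. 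To complete your proposal you must either execute the chaining argument in detail or adopt a discretization device of this kind; as written, the uniformity step is missing.
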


\begin{proof}
	First, we write  
	\beq
	n^{1/5}CV'''(\tilde h_n) = \alpha_1 + \alpha_2,
	\label{alphas}
	\eeq
	with $\alpha_1 = n^{1/5}CV'''(h_{n0})$ and $\alpha_2 = n^{1/5}\left\{
	CV'''(\tilde h_n) -CV'''(h_{n0}) \right\}$. To prove Lemma 1 it is
	sufficient to show that $\alpha_1 = o_P(1)$ and $ \alpha_2 =
	o_P(1)$. In order to study the term $\alpha_1$, we first consider the
	asymptotic MISE of the Parzen--Rosenblatt estimator of the density
	function. It is well-known that if
	$K$ is a second order symmetric kernel function and considering
	that $K$ has variance $1$, as stated in Assumption \ref{A2}, the MISE
	is: 
	\beqn
	M(h) = \frac{R(K)}{nh}+\frac{1}{4}h^4R(f'')+o\left\{(nh)^{-1}+h^4\right\},
	\eeqn
	and, hence,
	\beqn
	M'''(h) = -\frac{6R(K)}{nh^4}+6hR(f'')+o\left\{(nh^4)^{-1}+h\right\}.
	\eeqn
	Since 
	\beqn
	-\frac{6R(K)}{nh_{na}^4}+6h_{na}R(f'') = 0,
	\eeqn
	where $h_{na}$ denotes the bandwidth minimizing the asymptotic MISE, it follows immediately that $n^{1/5}M'''(h_{n0})$ converges to 0.
	Now, we can write
	\beqn
	n^{1/5}CV'''(h_{n0}) = n^{1/5}M'''(h_{n0})+n^{1/5}\eta_n,
	\eeqn
	where $\eta_n = CV'''(h_{n0})-M'''(h_{n0})$. Thus, to prove that
	$\alpha_1 = o_P(1)$, it is sufficient to prove that  
	\beq
	\eta_n = o_P\left(n^{-1/5}\right),
	\label{epsilon}
	\eeq
	or, by Markov's inequality, that
	$n^{2/5}\var\{CV'''(h_{n0})\}=o(1)$.
	It is easy to prove that, for every $r \geq 1$,
	\beq
	CV^{(r)}(h) = M^{(r)}(h)+\frac{1}{n(n-1)}\sum_{i \neq j} \bar{\gamma}_{nh}^{(r)}(X_i-X_j),
	\label{cvr}
	\eeq
	where 
	$$\gamma_n(u) = \frac{n-1}{n}K*K(u)-2K(u),$$ 
	$$\gamma_{nh}(u) = \gamma_n(u/h)/h,$$ 
	$$\bar{\gamma}_{nh}(u) = \gamma_{nh}(u)-E\{\gamma_{nh}(X_1-X_2)\}$$ 
	and
	$$\bar{\gamma}_{nh}^{(r)}(u) = \frac{{\rm d}^r \bar{\gamma}_{nh}(u)}{{\rm d}h^r}.$$
	 Therefore,
	\beqn
	\var\left\{CV'''(h)\right\} = \frac{1}{n^2(n-1)^2}\sum_{\substack{i,j,k,l=1\\i \neq j\\ k \neq l}}^n \cov\left\{\Psi_3(X_i-X_j),\Psi_3(X_k-X_l)\right\},
	\eeqn
	where
	\beqn
	\Psi_3(u) = \frac{{\rm d}^3 \gamma_{nh}(u)}{{\rm d}h^3} = -\left\{\frac{6}{h^4}\gamma_n(u/h)+\frac{18u}{h^5}\gamma_n'(u/h)+\frac{9u^2}{h^6}\gamma_n''(u/h)+\frac{u^3}{h^7}\gamma_n'''(u/h)\right\}.
	\eeqn
	
	Counting the different possible cases, we get
	\beq
	\var\left\{CV'''(h)\right\}  & = &  \frac{1}{n^2(n-1)^2}\left[4n(n-1)(n-2)\cov\left\{\Psi_3(X_1-X_2),\Psi_3(X_1-X_3)\right\} \right. \nonumber \\  & + & \left.2n(n-1)\var\left\{\Psi_3(X_1-X_2)\right\}\right]. \nonumber
	\eeq
	
	Let us now define the function $\tilde \Psi_3(u)$, such that, $\Psi_3(u) = \tilde \Psi_3(u/h)/h$. Consequently,
	\beqn
	\tilde \Psi_3(u) = -\frac{1}{h^3}\left\{6\gamma_n(u)+18u\gamma_n'(u)+9u^2\gamma_n''(u)+u^3\gamma_n'''(u)\right\}.
	\eeqn
	
	Taking into account the definition of $\mu_j(g) = \int x^jg(x)\,
	{\rm d}x$, $j=0,1, \ldots$, for any function $g$, 
	we shall now proceed to compute $\mu_j\left(\tilde \Psi_3\right)$, for $j = 0,2,4,6$, and $\mu_j\left(\tilde \Psi_3^2\right)$, for $j = 0,2$, since we will need these quantities later on. Note that $\mu_j\left(\tilde \Psi_3\right) = 0$, for every odd $j$, since $\tilde \Psi_3$ is symmetric.
	
	For $j=0$,
	\beqn
	\mu_0(\tilde{\Psi}_3) = -\frac{1}{h^3}\left\{6\mu_0(\gamma_n)+18\mu_1(\gamma_n')+9\mu_2(\gamma_n'')+\mu_3(\gamma_n''')\right\}.
	\eeqn
	
	Using integration by parts and the fact that $\mu_0(K) = \mu_0(K*K) =
	1$, we get 
	\beqn
	\mu_0(\gamma_n) &=& -\frac{n+1}{n},\\
	\mu_1(\gamma_n') &=& \frac{n+1}{n},\\
	\mu_2(\gamma_n'') &=& -2\left(\frac{n+1}{n}\right),\\
	\mu_3(\gamma_n''') &=& 6\left(\frac{n+1}{n}\right),
	\eeqn
	and, hence,
	\beqn
	\mu_0(\tilde{\Psi}_3) = 0.
	\eeqn
	
	Now,
	\beqn
	\mu_2(\tilde{\Psi}_3) = -\frac{1}{h^3}\left\{6\mu_2(\gamma_n)+18\mu_3(\gamma_n')+9\mu_4(\gamma_n'')+\mu_5(\gamma_n''')\right\}.
	\eeqn
	Partial integration and the equality $\mu_2(K*K) =
	2\mu_2(K)$ give 
	\beqn
	\mu_2(\gamma_n) &=& -2\mu_2(K)/n,\\
	\mu_3(\gamma_n') &=& 6\mu_2(K)/n,\\
	\mu_4(\gamma_n'') &=& -24\mu_2(K)/n,\\
	\mu_5(\gamma_n''') &=& 120\mu_2(K)/n,
	\eeqn
	and, therefore,
	\beqn
	\mu_2(\tilde{\Psi}_3) = 0.
	\eeqn
	
	We have
	\beqn
	\mu_4(\tilde{\Psi}_3) = -\frac{1}{h^3}\left\{6\mu_4(\gamma_n)+18\mu_5(\gamma_n')+9\mu_6(\gamma_n'')+\mu_7(\gamma_n''')\right\}.
	\eeqn
	Using integration by parts and the fact that $\mu_4(K*K) =
	2\mu_4(K)+6\mu_2(K)^2$, we get 
	\beqn
	\mu_4(\gamma_n) &=& 6\mu_2(K)^2-2\mu_4(K)/n,\\ 
	\mu_5(\gamma_n') &=& -30\mu_2(K)^2+10\mu_4(K)/n,\\ 
	\mu_6(\gamma_n'') &=& 180\mu_2(K)^2-60\mu_4(K)/n, \\
	\mu_7(\gamma_n''') &=& -1260\mu_2(K)^2+420\mu_4(K)/n,
	\eeqn
	and, therefore,
	\beqn
	\mu_4(\tilde{\Psi}_3) = \frac{144\mu_2(K)^2}{h^3}+O\left(\frac{1}{nh^3}\right).
	\eeqn
	
	Finally,
	\beqn
	\mu_6(\tilde{\Psi}_3) = -\frac{1}{h^3}\left\{6\mu_6(\gamma_n)+18\mu_7(\gamma_n')+9\mu_8(\gamma_n'')+\mu_9(\gamma_n''')\right\}.
	\eeqn
	Using integration by parts and the fact that $\mu_6(K*K) =
	2\mu_6(K)+30\mu_2(K)\mu_4(K)$, we get 
	\beqn
	\mu_6(\gamma_n) &=& 30\mu_2(K)\mu_4(K) + O(1/n),\\ 
	\mu_7(\gamma_n') &=& -210\mu_2(K)\mu_4(K) + O(1/n),\\ 
	\mu_8(\gamma_n'') &=& 1680\mu_2(K)\mu_4(K) + O(1/n),\\
	\mu_9(\gamma_n''') &=& -15120\mu_2(K)\mu_4(K) + O(1/n),
	\eeqn
	and so
	\beqn
	\mu_6(\tilde{\Psi}_3) =
	\frac{3600\mu_2(K)\mu_4(K)}{h^3}+O\left(\frac{1}{nh^3}\right). 
	\eeqn
	
	Analogously, it can be proved that
	\beqn
	\mu_0(\tilde{\Psi}_3^2) = \mu_2(\tilde{\Psi}_3^2) = O\left(\frac{1}{h^6}\right).
	\eeqn
	On the other hand,
	\beqn
	\var\left\{\Psi_3(X_1-X_2)\right\}= I_1-I_2^2
	\eeqn
	and
	\beqn
	\cov\left\{\Psi_3(X_1-X_2),\Psi_3(X_1-X_3)\right\} = I_3-I_2^2,
	\eeqn
	where
	\beqn
	I_1 &=& \int \Psi_3^2*f(x)f(x)\,{\rm d}x,\\
	I_2 &=& \int \Psi_3*f(x)f(x)\,{\rm d}x,\\
	I_3 &=& \int \Psi_3*f(x)^2f(x)\,{\rm d}x.
	\eeqn
	
	Simple algebra and Taylor expansions give
	\beqn
	I_1 &=& \frac{1}{h}\int\int \tilde{\Psi}_3(u)^2f(x)\left\{f(x)+\frac{h^2u^2}{2}f''(\zeta)\right\}\,{\rm d}x\, {\rm d}u \\&=& \frac{1}{h}\left[\mu_0(\tilde{\Psi}_3^2)R(f)+O\left\{h^2\mu_2(\tilde{\Psi}_3^2)\right\}\right] = O\left(\frac{1}{h^7}\right),
	\eeqn
	\beqn
	I_2 &=& \int\int \tilde{\Psi}_3(u)f(x)\left\{\frac{h^4u^4}{4!}f^{(4)}(x)+\frac{h^6u^6}{6!}f^{(6)}(\xi)\right\}\,{\rm d}x \, {\rm d}u \\&=& \frac{h^4}{24}\mu_4(\tilde{\Psi}_3)R(f'')+O\left\{h^6\mu_6(\tilde{\Psi}_3)\right\} = 6\mu_2(K)^2R(f'')h+O\left(h^3\right),
	\eeqn
	and
	\beqn
	I_3 &=& \int f(x)\left\{\int\frac{1}{h}\tilde{\Psi}_3\left(\frac{x-y}{h}\right)f(y)\,{\rm d}y\right\}^2\,{\rm d}x \\&=& \int f(x)\left\{6\mu_2(K)^2f^{(4)}(x)h+O\left(h^3\right)\right\}^2\,{\rm d}x \\&=& 36\mu_2(K)^4\int f^{(4)}(x)^2f(x)\,{\rm d}x h^2 + O\left(h^4\right).
	\eeqn
	Therefore,
	\begin{eqnarray*}
		\var\left\{\Psi_3(X_1-X_2)\right\} &=& O\left(\frac{1}{h^7}\right),\\
		\cov\left\{\Psi_3(X_1-X_2),\Psi_3(X_1-X_3)\right\} &=& \mathcal{L}h^2+O\left(h^4\right),
	\end{eqnarray*}
	where $\mathcal{L} = 36\mu_2(K)^4\left\{\int f^{(4)}(x)^2f(x)\,{\rm d}x -
	R(f'')^2\right\}$. Consequently, 
	\begin{eqnarray}
	\var\left\{CV'''(h)\right\} = O\left(\frac{1}{n^2h^7}\right),
	\label{varCV'''}
	\end{eqnarray}
	and $\var\left\{CV'''(h_{n0})\right\} = O\left(n^{-3/5}\right)$. 
	Therefore, as required, $\var\left\{CV'''(h_{n0})\right\} =
	o\left(n^{-2/5}\right)$ and so $\alpha_1 = o_P(1)$.
	
	To handle the term $\alpha_2$ in (\ref{alphas}), we write
	\beq
	\alpha_2 = n^{1/5}\left\{ CV'''(\tilde h_n) - CV'''(h_{n0}) \right\} = n^{1/5} (\tilde h_n - h_{n0}) CV^{(4)}(\overline h_n),
	\label{alpha_2} 
	\eeq
	where $\overline h_n$ is an intermediate value between $\tilde h_n$
	and $h_{n0}$. The results of \cite{HallMarron} imply that   
	$\tilde h_n-h_{n0} = O_P\left(n^{-3/10}\right)$. Thus,  
	in view of (\ref{alpha_2}), to prove $\alpha_2 = o_P(1)$ it is sufficient
	to show that   
	\beq
	n^{-1/10} \sup_{h \in I(h_n, h_{n0})} |CV^{(4)}(h)| = o_P(1),
	\label{supCV(4)} 
	\eeq
	where $I(h_n, h_{n0})$ is the interval with endpoints $h_n$ and $h_{n0}$.
	
	Let $a$ be arbitrarily small but fixed, and such that
	$an^{-1/5}<h_{n0}<a^{-1}n^{-1/5}$. Without loss of generality, we
	suppose that $CV(h)$ is minimized over a finite set $I_n$ having 
	equally spaced points on the interval $(an^{-1/5},a^{-1}n^{-1/5})$. It is
	assumed that the number of points in $I_n$ is $n^{2/5-d}$, where
	$0<d<1/5$. Let $h_n^*$ be the minimizer of 
	$M(h)$ over $I_n$. Then optimizing $CV$ over $I_n$ suffices since
	$h_n^*-h_{n0}$ is of order   $n^{-3/5+d}$, implying that this
	source of error is smaller than $n^{-2/5}$ and hence negligible for
	the current argument. It is 
	enough to show that $n^{-1/10}\max_{h\in I_n}|CV^{(4)}(h)|$ converges in
	probability to 0. Since
	$|CV^{(4)}(h)|\le|CV^{(4)}(h)-E_n(h)|+|E_n(h)|$, where $E_n(h)=E\left\{
	CV^{(4)}(h) \right\}$, it suffices to show that
	$\lim_{n\ra\infty}n^{-1/10}\max_{h\in I_n}|E_n(h)|=0$ and $n^{-1/10}\max_{h\in
		I_n}|CV^{(4)}(h)-E_n(h)|=o_P(1)$.   
	
	For any $\epsilon>0$, we have
	\begin{eqnarray*}
		P\left\{n^{-1/10}\max_{h\in
			I_n}|CV^{(4)}(h)-E_n(h)|\ge\epsilon\right\}&\le& P\left[\bigcup_{h\in
			I_n}\left\{n^{-1/10}|CV^{(4)}(h)-E_n(h)|\ge\epsilon\right\}\right]\\
		&\le&\sum_{h\in
			I_n}P\left\{n^{-1/10}|CV^{(4)}(h)-E_n(h)|\ge\epsilon\right\}\\
		&\le& \sum_{h\in I_n}\frac{\Var\left\{CV^{(4)}(h)\right\}}{n^{1/5}\epsilon^2}\\
		&\le&\frac{n^{1/5-d}}{\epsilon^2}\,\max_{h\in I_n}\Var\left\{CV^{(4)}(h)\right\}.
	\end{eqnarray*}

	Let us now obtain uniform bounds for the expectation and
	variance of $CV^{(4)}(h)$. It is straightforward to prove that  
	\beqn
	E_n(h) = M^{(4)}(h) \sim 6 \mu_2(K)^2 R(f'')  + 24 R(K) n^{-1} h^{-5}
	\eeqn
	and, since $h_{n0} \sim h_{na} = C n^{-1/5}$, we have that $E_n(h_{n0}) \sim \mathcal{D}$, for some constant $\mathcal{D}>0$. 
	On the other hand, since $I_n \subset [an^{-1/5},a^{-1}n^{-1/5}]$, we get
	\beq
	\max_{h \in I_n} E \left\{ CV^{(4)}(h) \right\} = O(1).
	\label{supECV(4)} 
	\eeq
	To obtain a uniform bound for the variance, long and tedious
	calculations can be performed to get a similar  
	expression to (\ref{varCV'''}), but for the fourth derivative:
	\begin{eqnarray*}
		\var\left\{CV^{(4)}(h)\right\} = O\left(\frac{1}{n^2h^9}\right).
	\end{eqnarray*}
	Using again $h_{n0} \sim C n^{-1/5}$ and $I_n \subset [an^{-1/5},a^{-1}n^{-1/5}]$, we obtain
	\beq
	\max_{h \in I_n} \var \left\{ CV^{(4)}(h) \right\} = O(n^{-1/5}).
	\label{supVarCV(4)} 
	\eeq
	
	Using expressions (\ref{supECV(4)}) and (\ref{supVarCV(4)}), it now
	follows that  
	$$\max_{h \in I_n}n^{-1/10}|CV^{(4)}(h)| = o_P(1),$$ thus completing the
	proof.
\end{proof}

\begin{proof}[of Theorem~\ref{th}]
	The variance of the bagging bandwidth is:
	\begin{eqnarray}\label{bagvar}
	\var\left\{\hat h(m,N)\right\}=\frac{1}{N}\var\left(\tilde h_{m,1}\right)+\frac{N-1}{N}\cov\left(\tilde h_{m,1},\tilde h_{m,2}\right). 
	\end{eqnarray}
	
	The work of \cite{HallMarron} provides an approximation to
	the variance of $\tilde h_{m,1}$: 
	\begin{eqnarray}\label{Var}
	\frac{\Var\left(\tilde h_{m,1}\right)}{h_{n0}^2}=Am^{-1/5}+o\left(m^{-1/5}\right),
	\end{eqnarray}
	where
	\beq\label{eq:Aconst}
	A = \frac{8R(V)R(f)\mu_2(K)^{4/5}}{25R(K)^{9/5}R(f'')},
	\eeq
	the function $V$ is defined in \cite{BhatHart} and only depends on the kernel $K$ and $\mu_j(g) = \int x^j g(x)\,{\rm d}x$ for $j= 0,1,2,\dots$ \cite{BhatHart} derive the following approximation to the last term in
	(\ref{bagvar}): 
	\begin{eqnarray}\label{Cov}
	\cov\left\{\tilde h_{m,1},\tilde h_{m,2}\right\}=\Var\left(\tilde
	h_{m,1}\right)\left(\frac{m}{n}\right)^2+o\left(m^{9/5}n^{-12/5}\right). 
	\end{eqnarray}
	Plugging (\ref{Var}) and (\ref{Cov}) into (\ref{bagvar}),
	when $N$ is either fixed or tending to $\infty$ with $n$, then, 
	\beqn
	\var\left\{\hat h(m,N)\right\}\sim
	AC^2m^{-1/5}n^{-2/5}\left\{\frac{1}{N}+\left(\frac{m}{n}\right)^2\right\}.
	\eeqn
	
	Regarding the bias of $\hat h(m,N)$, as explained in Section 3 of the main paper, we only have to focus on deriving the bias inherent to cross-validation itself. Let $\hat h_n$ be the ordinary cross-validation bandwidth for a sample of size $n$, and let $h_{n0}$ be the minimizer of MISE, $M(h)$. Using the fact that $CV'(\hat h_n) = 0$, a Taylor expansion gives
	\beqn
	\hat h_n-h_{n0} = -\frac{CV'(h_{n0})}{CV''(h_n)}
	\eeqn
	for $h_n$ between $\hat h_n$ and $h_{n0}$. Now expand $1/CV''(h_n)$ in a Taylor series about $\Delta = M''(h_{n0})$, yielding
	\beqn
	\hat h_n-h_{n0} = -\frac{CV'(h_{n0})}{\Delta}+\frac{CV'(h_{n0})\left\{CV''(h_n)-\Delta\right\}}{\hat{\Delta}^2},
	\eeqn
	where $\hat \Delta$ is between $CV''(h_n)$ and $M''(h_{n0})$.
	
	Using the notation in  equation (4) of the main paper,
        $\xi_n = -CV'(h_{n0})/\Delta$ and  
	$$
	e_n=\frac{CV'(h_{n0})\left\{CV''(h_n)-\Delta\right\}}{\hat \Delta^2}.
	$$
	
	The random variable $-CV'(h_{n0})/\Delta$ has mean $0$ and is $O_P\left(n^{-3/10}\right)$, as shown by \cite{HallMarron}. We will show that $n^{2/5}e_n \to Y$ in distribution, where $E(Y) = \mu_{CV} < 0$ and $\var(Y) > 0$, with $\mu_{CV}$ as in equation (5) in the main paper. In effect, this will establish the first order bias of $\hat h_n$ as an estimator of $h_{n0}$. Using results of \cite{HallMarron}, $n^{4/5}\hat \Delta^2 \to D^2 > 0$ in probability, where $D$ is the limit of $n^{2/5}M''(h_{n0})$ as $n \to \infty$. It is sufficient then to consider
	\beq
	n^{6/5}CV'(h_{n0})\left\{CV''(h_n)-\Delta\right\} = n^{6/5}CV'(h_{n0})\left\{CV''(h_{n0})-\Delta+\delta_n\right\},
	\label{nume}
	\eeq
	where $\delta_n = CV''(h_n)-CV''(h_{n0})$. Now,
	\beqn
	\delta_n = (h_n-h_{n0})CV'''(\tilde h_n),
	\eeqn
	where $\tilde h_n$ is between $h_n$ and $h_{n0}$. From \cite{HallMarron}, we know that $CV'(h_{n0}) = O_P\left(n^{-7/10}\right)$ and $h_n-h_{n0} = O_P\left(n^{-3/10}\right)$. It follows that
	\beqn
	n^{6/5}CV'(h_{n0})\delta_n = CV'''(\tilde h_n) O_P(n^{1/5}).
	\eeqn
	
	Considering Lemma \ref{lemma1}, in equation (\ref{nume}), we need only investigate
	\beqn
	n^{6/5}CV'(h_{n0})\left\{CV''(h_{n0})-\Delta\right\}.
	\eeqn
	
	\cite{HallMarron} show that
	\beqn
	n^{7/10}CV'(h_{n0}) \to N(0,\sigma_1^2)
	\eeqn
	in distribution. As shown in \cite{BhatHart}, $h_{n0}\left\{CV''(h_{n0})-\Delta\right\}$ is identical in structure to $CV'(h_{n0})$ and, hence,
	\beqn
	n^{7/10}h_{n0}\left\{CV''(h_{n0})-\Delta\right\} \sim C_0 \sqrt{n}\left\{CV''(h_{n0})-\Delta\right\} \to N(0, \sigma_2^2)
	\eeqn
	in distribution. Using the Cram\'er-Wold device, it follows that
	\beqn
	\sqrt{n}\left\{n^{1/5}CV'(h_{n0}), CV''(h_{n0})-\Delta\right\}
	\eeqn
	converges in distribution to a bivariate normal random variable with mean vector \textbf{$0$} and covariance matrix \textbf{$\Sigma$}. Using Theorem B., p. 124 of \cite{serfling}, we have
	\beqn
	n^{6/5}CV'(h_{n0})\left\{CV''(h_{n0})-\Delta\right\} \to Y_1Y_2
	\eeqn
	in distribution, where $(Y_1, Y_2)$ are bivariate normal with mean vector \textbf{$0$} and covariance matrix \textbf{$\Sigma$}. 
	\cite{BhatHart} show that $E(Y_1Y_2)$ is 
	\beqn
	-\frac{8}{R(K)^{4/5} R(f'')^{-4/5}}\int V(u) W(u) \, {\rm d}u \int f^2(x)\,{\rm d}x.
	\eeqn
	
	Also, taking into account that \citep{BhatHart}
	$$
	M''(h_{n,0}) \sim 5 R(K)^{2/5} R(f'')^{3/5} n^{-2/5},
	$$
	the limiting expectation of $n^{2/5}(\hat h_n-h_{n0})$ is
	$$
	\frac{E(Y_1Y_2)}{D^2} = \mu_{CV}= - \frac{8R(f)\int V(u) W(u)
		du}{25R(K)^{8/5}R(f'')^{2/5}},  
	$$
	which completes the proof.
\end{proof}

\section*{Appendix 2. Simulation study}
\label{sec:sim}
To test the behaviour of the bagged cross-validation bandwidth (3) in the main paper, some simulation studies were performed considering different density functions, sample sizes ($n$), subsample sizes ($m$), and number of subsamples ($N$). For the sake of brevity, we only present the results obtained for two normal mixture densities, although similar results were obtained for other densities.  We denote by $\mu=(\mu_1, \ldots, \mu_k)$, $\sigma=(\sigma_1, \ldots, \sigma_k)$ and $w=(w_1, \ldots, w_k)$ the mean, standard deviation and weight vectors, respectively, for the density mixture $f\left(x\right)=\sum_{i=1}^{k}w_{i}\phi_{\mu_{i},\sigma_{i}}$, with $\phi_{\mu_{i},\sigma_{i}}$ a $N\left(\mu_{i},\sigma_{i}\right)$
density, $i=1, \ldots, k$. Here, we consider the density mixture of two normals, denoted by D1, with parameters $\mu = (0, 1.5)$, $\sigma = (1, 1/3)$ and $w = (0.75, 0.25)$, and the claw density, denoted by D2, mixture of six normals, with parameters $\mu = (0,-1,-0.5,0,0.5,1)$, $\sigma = (1,0.1,0.1,0.1,0.1,0.1)$ and $w = (0.5,0.1,0.1,0.1,0.1,0.1)$.

In this experiment, $1,000$ samples of size $n = 10^5$ were simulated from the previous densities and the bagged, $\hat{h}(m,N)$, and leave-one-out cross-validation, $\hat{h}_n$, bandwidths were computed. The bagged bandwidths were calculated using $N=500$ subsamples and considering four values for the size of the subsamples, $m$, including the theoretical optimal values, $m_0 = 13,081$ and $m_0 = 20,326$, for densities D1 and D2, respectively. For each sample, we also computed the estimated $m_0$ using the algorithm presented in Section 4 of the main paper, with values $s=50$ and $r \in \{500, 1,000, 5,000\}$ in Step 1. 
The Gaussian kernel was used throughout the study. 
The R \citep{Rsoft} package \texttt{baggedcv} \citep{baggedcv}
was employed to carry out the simulation experiments.

To compute the different cross-validation bandwidths involved in this simulation, $\hat{h}_n$ and $\hat{h}_{m,i}$, $i=1, \ldots, N$, we employed the R function \texttt{bw.ucv}. 
This function uses a binned implementation and, therefore, it is extremely fast. However, when the number of bins, \texttt{nb}, is significantly smaller than the sample size, \texttt{bw.ucv} has the disturbing tendency to choose the very smallest bandwidth allowed. This is illustrated in Listing \ref{output1}, where we show the output of the \texttt{bw.ucv} function applied to a sample of size $n=10^6$ drawn from a standard normal and the number of bins set to its default value of $\texttt{nb}=1,000$. In this case the true cross-validation bandwidth is approximately 0.06, while \texttt{bw.ucv} returned a much smaller smoothing parameter, the lower bound of the search interval.

\lstinputlisting[frame=tb,caption=Bad behaviour of \texttt{bw.ucv} when using the default number of bins,captionpos=b,label=output1]{output1.txt}

For some densities, \texttt{bw.ucv} works fine with \texttt{nb} being relatively small with respect to the sample size. However, for more complex, heavy-tailed or multimodal, densities, \texttt{nb} needs to be quite close to the sample size for \texttt{bw.ucv} to give sensible results. This limits the computational gain that binned cross-validation could in principle achieve. 
Even when $\texttt{nb}$ is equal to the sample size, \texttt{bw.ucv} returns an incorrect value in a small proportion of cases.
In spite of this, in practice, we recommend using \texttt{bw.ucv} with $\texttt{nb}$ close to the sample size. Taking this suggestion into account, if $\texttt{nb} = m$ at the subsample level for $\hat{h}(m,N)$, we found that the average of the bagged bandwidths obtained using \texttt{bw.ucv} is usually quite close to the results obtained employing the more accurate non-binned version of $\hat{h}(m,N)$. Moreover, by using \texttt{bw.ucv} in the implementation of $\hat{h}(m,N)$, its runtime can obviously be significantly reduced, even being much shorter than the time needed for the computation of the binned cross-validation selector, especially for large sample sizes and certain values of $m$ and $N$. This can be observed in Table \ref{tab:times}, which shows the computing time for the binned version of leave-one-out 
cross-validation and the bagged bandwidth selector for different values of $n$, $m$ and $N$. For $\hat{h}(m,N)$, we considered $\texttt{nb} = m$ at the subsample level and the code was run in parallel on an Intel Core i5-8600K 3.6GHz using the R package \texttt{baggedcv} \citep{baggedcv}. In the case of the binned version of leave-one-out cross-validation, the number of bins was also set equal to $n$ to provide a fair comparison of both methods. As we
can see the bagged
bandwidth can achieve a significant reduction in computing time with
respect to binned leave-one-out cross-validation for samples of considerable size.

\begin{table}[h]
	\begin{center}
\begin{threeparttable}
	\def~{\hphantom{0}}
	\caption{Elapsed time, in seconds, for binned leave-one-out cross-validation and the bagged bandwidth selector.	\label{tab:times}}
		\begin{tabular}{ccccc}
			&&&Bagged CV&\\
			\cline{3-5}
			&&$m=1,000$&$m=5,000$&$m=10,000$\\
			$n$ & bw.ucv( . , nb=n) &$N=500$&$N=500$&$N=500$ \\ 
			\hline
			$10^5$ & 3.1 & 1.1 & 2.0 & 4.3\\
			$10^6$ & 367 & 1.3 & 2.2 & 4.4\\
			\hline
	\end{tabular}
	\begin{tablenotes}
	\item Computing time for
		bagged cross-validation depends on $m$, $N$ and the number of CPU cores.
	\end{tablenotes}
	 \end{threeparttable}
 \end{center}
\end{table}


In addition to the substantial reduction in computing time, the bagged cross-validation bandwidth 
yielded, in general, greater statistical precision. This can be observed in Figure \ref{img:hhat_supp}, where the sampling distributions of $\log \hat{h}_n/h_{n0}$ and $\log \hat{h}(m,N)/ h_{n0}$, for different values of $m$, for models D1, in the left panel, and D2, in the  right panel, are presented.
Specifically, we considered, for D1, the values of $m$: $5,000$,
  $13,081$ ($m_0$), $20,000$, and $\hat m_0$ computed with $s=50$ and
  $r=500, 1,000, 5,000$, while, for D2, the values of $m$ employed
  were:  $5,000$, $20,326$ ($m_0$), $25,000$, and $\hat m_0$ computed
  with $s=50$ and $r=500, 1,000, 5,000$. 
 It is clear that the bagged bandwidth achieves, in general, an important reduction in the mean squared error with respect to the leave-one-out cross-validation selector. Namely, the bagged bandwidth with $m = m_0$ produced a mean squared error which is $95.3\%$ and $92.2\%$ lower than that of the leave-one-out cross-validation bandwidth for models D1 and D2, respectively.
This significant reduction is also observed, in general, when using $m=\hat m_0$ for each simulated sample. In that case, for $r = 1,000$ ($r = 5,000$), the mean squared error reduction with respect to leave-one-out cross-validation is $95.9\% $ ($95.9 \% $) for model D1 and $92.3\% $ ($93.5\% $) for model D2. 

\begin{figure}[h]
	\includegraphics[width=0.5\textwidth]{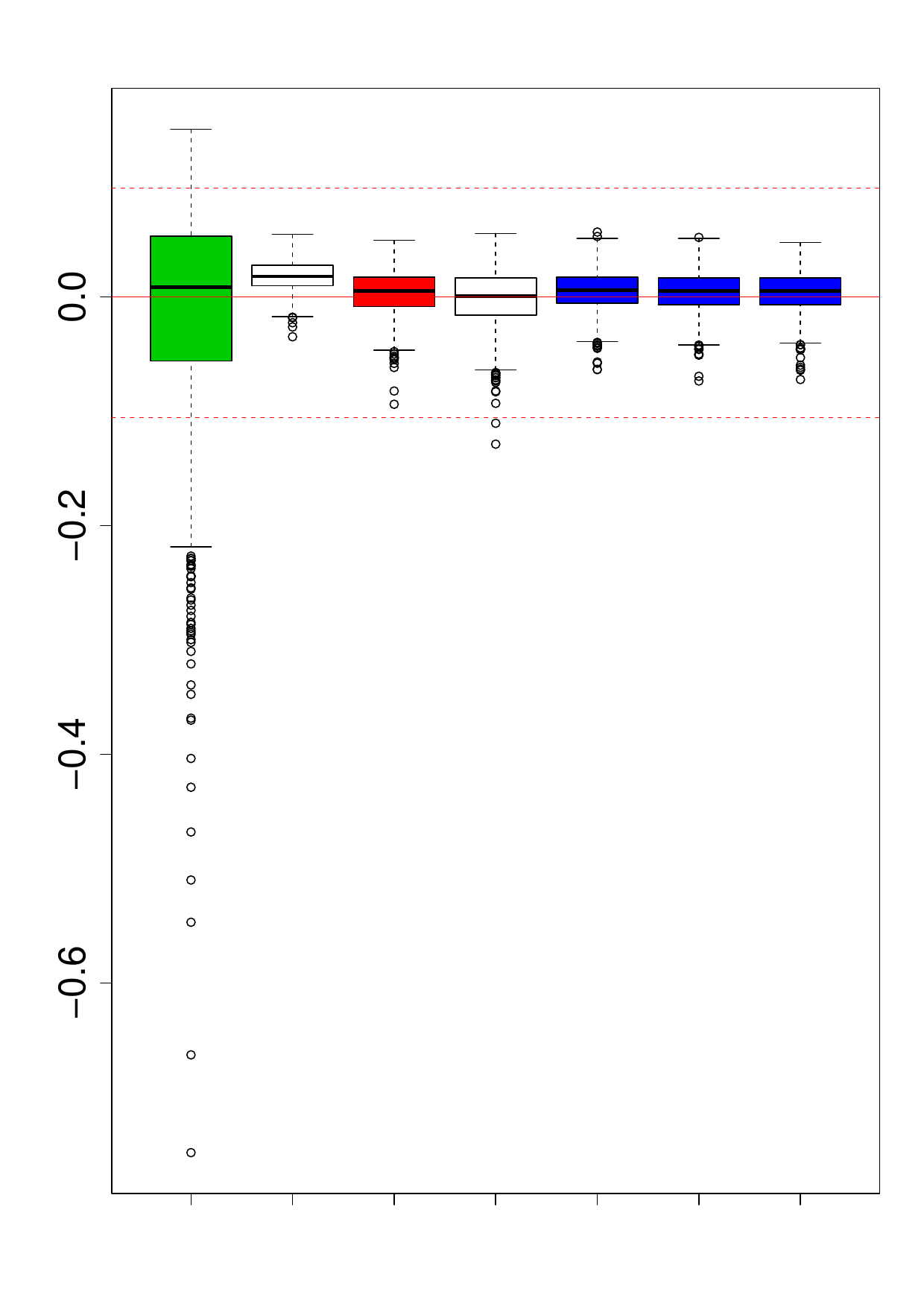}
	\includegraphics[width=0.5\textwidth]{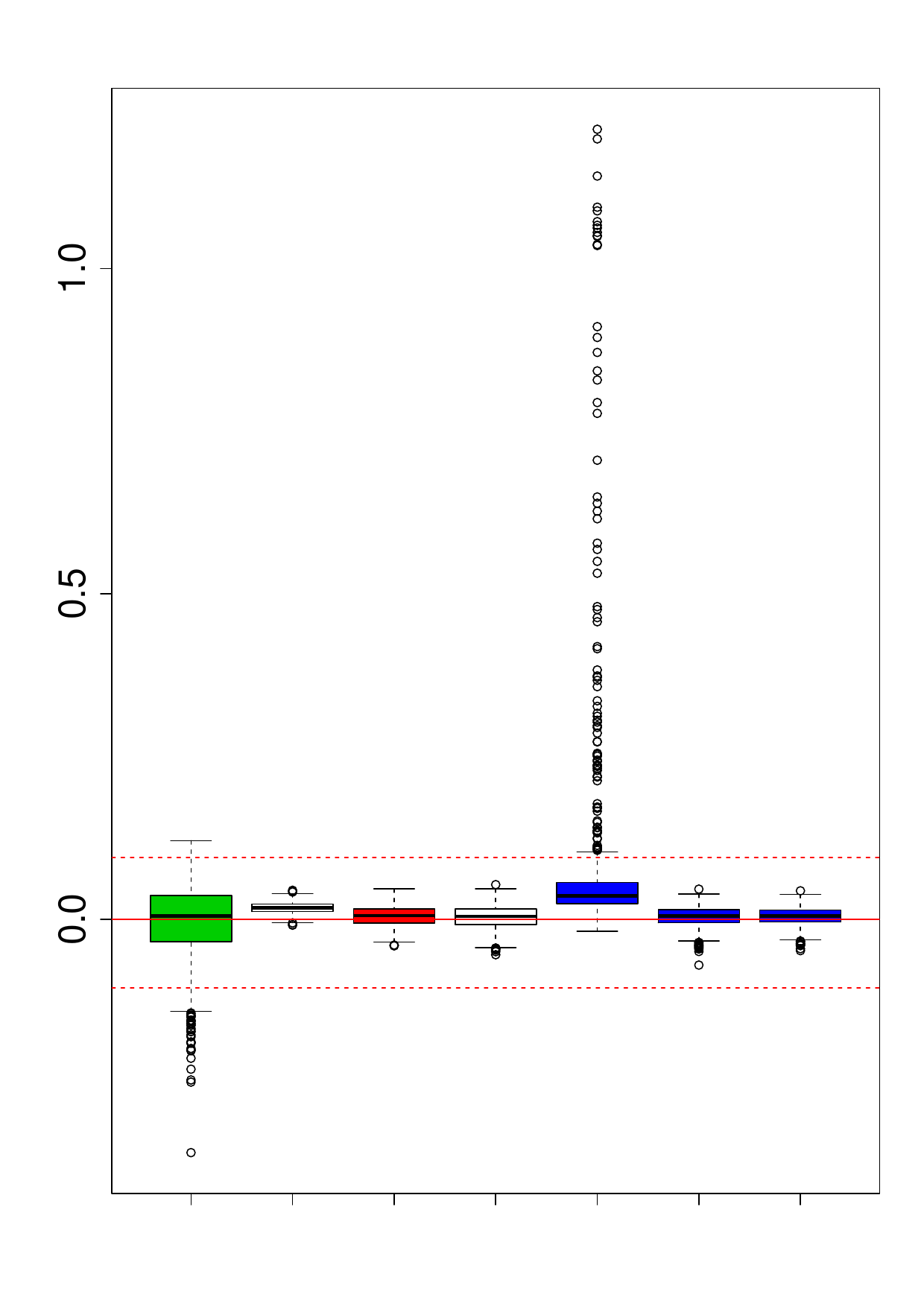}
	\caption{Sampling distribution of
		$\log \hat{h}/h_{n 0}$, with $\hat{h}$
		denoting the leave-one-out cross-validation (green) and the
		bagged bandwidths for different values of $m$. For the
		bagged bandwidths, we considered $N = 500$ and $m \in
		\{5,000, 13,081$ (red)$, 20,000, \hat m_0$ (blue)$\}$, for
		density D1 (left panel); and $m \in \{5,000, 20,326$ (red)$,
		25,000, \hat m_0$ (blue)$\}$, for density D2 (right
		panel). 
		The two white boxes correspond, from left to right, to $m=5,000$ and $20,000$, for D1 (left panel); and to $m=5,000$ and $25,000$, for D2 (right panel).
		The three blue boxes correspond, from left to
		right, to  $r = 500, 1,000, 5,000$.
		Red dotted lines are plotted
at values 0.9 and 1.1 for reference.
		\label{img:hhat_supp}} 
\end{figure}

The mean squared error of the bagged bandwidth using $m=\hat m_0$ may be larger than the one for leave-one-out cross-validation for density D2 using $r = 500$, as it can be observed in the left blue box-plot on the right panel in
Figure \ref{img:hhat_supp}. These results are
somewhat misleading because the final behaviour of the kernel density estimator with the bagged bandwidth selector, denoted by $\hat h$ for simplicity, is still very good in this setting. The distribution of $\hat h$ is biased upward
and there are numerous extremely large values of $\hat h$. However,
it turns out that even the largest of these bandwidths produces very
effective density estimates, as observed in Figure \ref{img:clawest}. 
Consider, for example, $\log \hat
h/h_{n0}=1$, which means that $\hat h \simeq 2.72h_{n0}$. In Figure
\ref{img:clawest}, we provide the claw 
density and two kernel estimates from a sample of size
$10^5$. The bandwidths of the two estimates are $h_{n0}=0.031$ and
$2.72h_{n0} \simeq 0.084$. The kernel estimate with larger bandwidth captures
the five modes and has better tail behaviour than the estimate based
on the MISE bandwidth. Figure \ref{img:clawest} illustrates the fact
that integrated squared error (ISE) loss is not always ideal. One might well
prefer an estimate with larger than optimum ISE, as long as it captures
all the important features of the underlying density and is smoother
than the ISE optimal estimate. However, despite this remark, ISE error criterion can be used to see the effect of the different bandwidth selectors on the kernel density estimates.
Figure \ref{img:ises} shows the sampling distribution of ratio of the ISE of the kernel density estimates using the bagging cross-validation
bandwidths and the classical cross-validation one, ${\rm ISE}\{\hat h(m,N)\}/{\rm ISE}(\hat{h}_n)$, for both models and the same values of $m$ considered in Figure \ref{img:hhat_supp}. In this case, outliers were omitted in order to be able to appreciate the differences between the different box-plots.

\begin{figure}[h]
	\includegraphics[width=1.0\textwidth]{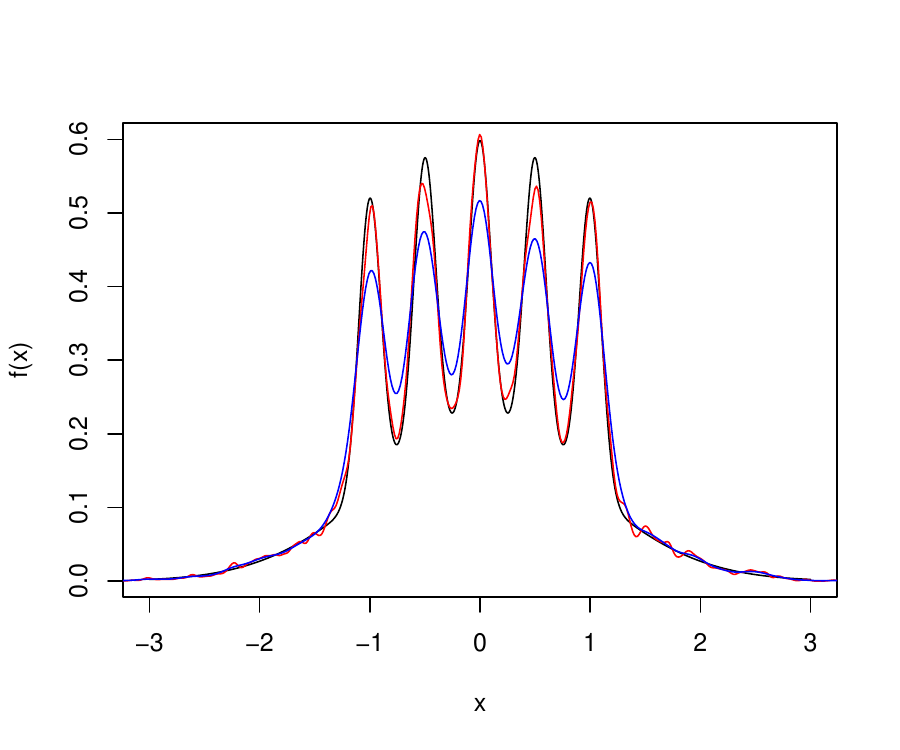}
	\caption{Claw density (black line) and kernel estimates 
		(red and blue lines). The kernel
		estimates are computed from a sample of size $10^5$. The
		red estimate uses the MISE optimal bandwidth of 0.031 and the
		blue uses bandwidth $0.084$.\label{img:clawest}} 
\end{figure}

\begin{figure}[h]
	\includegraphics[width=0.5\textwidth]{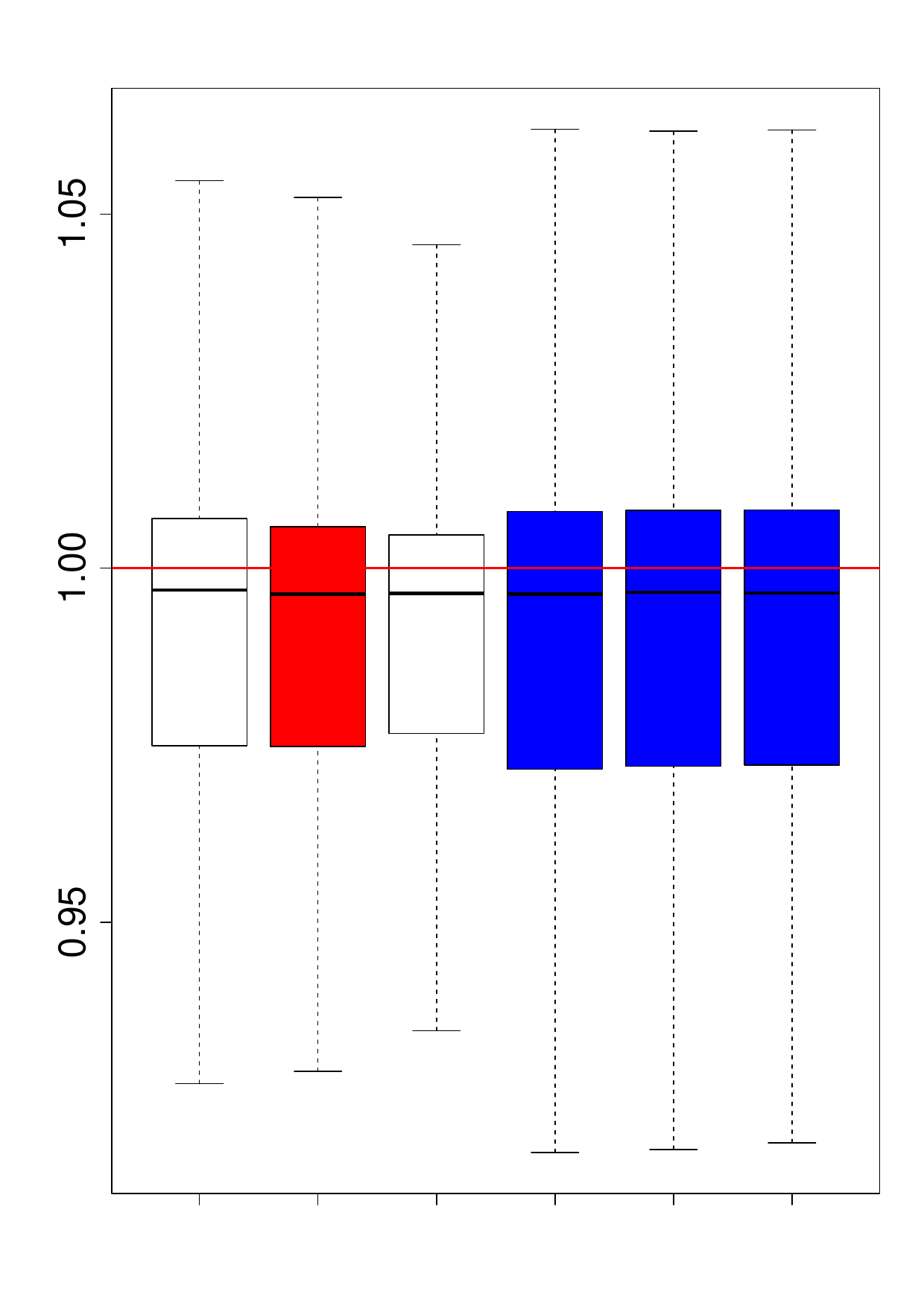}
	\includegraphics[width=0.5\textwidth]{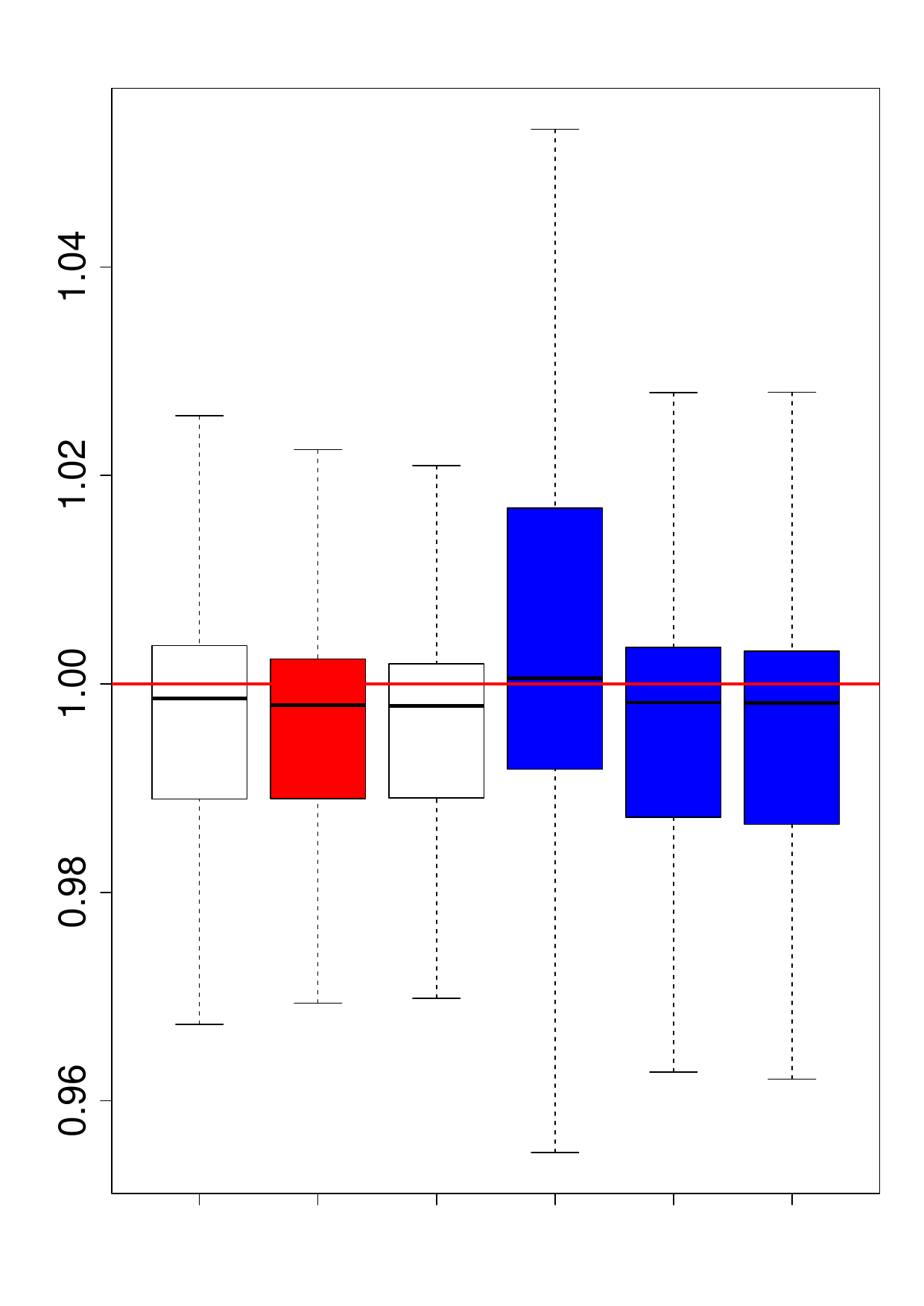}
	\caption{Sampling distribution of the random variable ${\rm ISE}\{\hat h(m,N)\}/{\rm ISE}(\hat{h}_n)$, with $\hat h(m,N)$ denoting the
		bagged bandwidths for different values of $m$, and $\hat{h}_n$ denoting the leave-one-out cross-validation bandwidth. For the
		bagged bandwidths, we considered $N = 500$ and $m \in
		\{5,000, 13,081$ (red)$, 20,000, \hat m_0$ (blue)$\}$, for
		density D1 (left panel); and $m \in \{5,000, 20,326$ (red)$,
		25,000, \hat m_0$ (blue)$\}$, for density D2 (right
		panel). 
		The two white boxes correspond, from left to right, to $m=5,000$ and $20,000$, for D1 (left panel); and to $m=5,000$ and $25,000$, for D2 (right panel).
		The three blue boxes correspond, from left to
		right, to  $r = 500, 1,000, 5,000$.}
		\label{img:ises} 
\end{figure}

The means of ${\rm ISE}\{\hat h(m,N)\}/{\rm ISE}(\hat{h}_n)$ for the values of $m$ and $N$, and densities considered in Figure \ref{img:ises}, as well as the proportion of times where the ISE of the kernel density estimates using $\hat h(m,N)$ is lower than using $\hat{h}_n$ are shown in Table \ref{tab:means_ise}.
In general, it can be observed a slightly better performance of the estimators when using the bagged bandwidths than when employing the leave-one-out cross-validation selector, except when considering the density D2 and using $m=\hat m_0$, with $r = 500$ (left blue box-plot on the right panel in
Figure \ref{img:ises}). These results are totally consistent with those shown in Figure \ref{img:hhat_supp} for the bandwidths.

\begin{table}[h]
	\begin{center}
\begin{threeparttable}
	\def~{\hphantom{0}}
	\caption{Top Table: means of ${\rm ISE}\{\hat h(m,N)\}/{\rm ISE}(\hat{h}_n)$, with $\hat h(m,N)$ computed using the combinations of $m$ and $N$ and densities considered in Figure \ref{img:ises}. Bottom Table: proportion of values of $\hat h(m,N)$ whose {\rm ISE} is lower than that of $\hat{h}_n$.}
		\begin{tabular}{ccccccc}
			\\
			&&&Means&&&\\
			\hline
			Density & $B_1$ & $B_2$ & $B_3$ & $B_4$ & $B_5$ & $B_6$\\
			D1 & $0.98533$ & $0.98505$ & $0.98512$ & $0.98448$ & $0.98428$ & $0.98537$\\
			D2 & $0.99624$ & $0.99594$ & $0.99530$ & $1.23742$ & $0.99539$ & $0.99494$\\
			\hline\\
			&&&Proportions&&&\\
			\hline
			Density & $B_1$ & $B_2$ & $B_3$ & $B_4$ & $B_5$ & $B_6$\\
			D1 & 0.606 & 0.603 & 0.609 & 0.590 & 0.593 & 0.590\\
			D2 & 0.584 & 0.622 & 0.637 & 0.461 & 0.604 & 0.599\\
			\hline
	\end{tabular}
	\begin{tablenotes}
	\item $B_i$ refers to the $i$th box-plot in order of appearance in Figure \ref{img:ises}.
		\end{tablenotes}
	 \end{threeparttable}
	\label{tab:means_ise}
	\end{center}
		\end{table}

In Figure \ref{img:mhat_supp}, the sampling distribution of $\hat m_0/m_0$ is shown. 
It can be observed that the mean squared error
of $\hat m_0$ is reduced as $r$ increases. Furthermore, the bias of
the estimator depends on the complexity of the target density. 
For small values of $r$, in spite of the high variability of $\hat m_0$, the sampling distribution of the bagged bandwidth, considering $m = \hat m_0$, is virtually unchanged with respect to the case $m = m_0$ for densities that are not very complex, such as D1. For more complex densities, such as D2, the effect that the variability of $\hat m_0$ has on the bagged bandwidth is more noticeable for small values of $r$, translating into a more biased bandwidth.
More
importantly, when we compare the errors in Figure
\ref{img:mhat_supp} and Figure \ref{img:hhat_supp}, it is clear that
there is a large range of values for $m$ around its optimal value,
$m_0$, such that the effect the error of $\hat m_0$ has on the
sampling distribution of $\hat h(\hat m_0, N)$ is very small.

\begin{figure}[h]
	\includegraphics[width=0.5\textwidth]{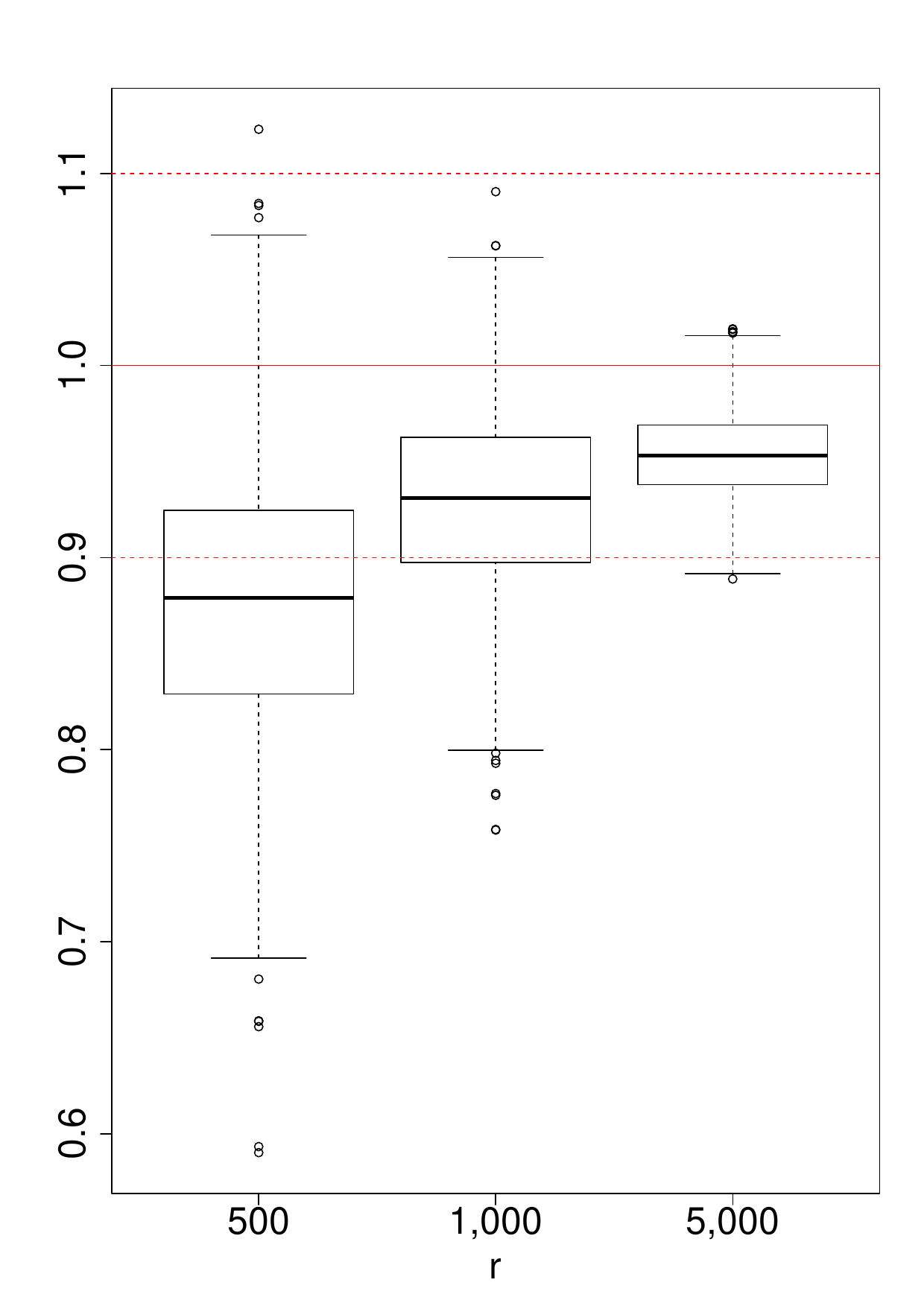}
	\includegraphics[width=0.5\textwidth]{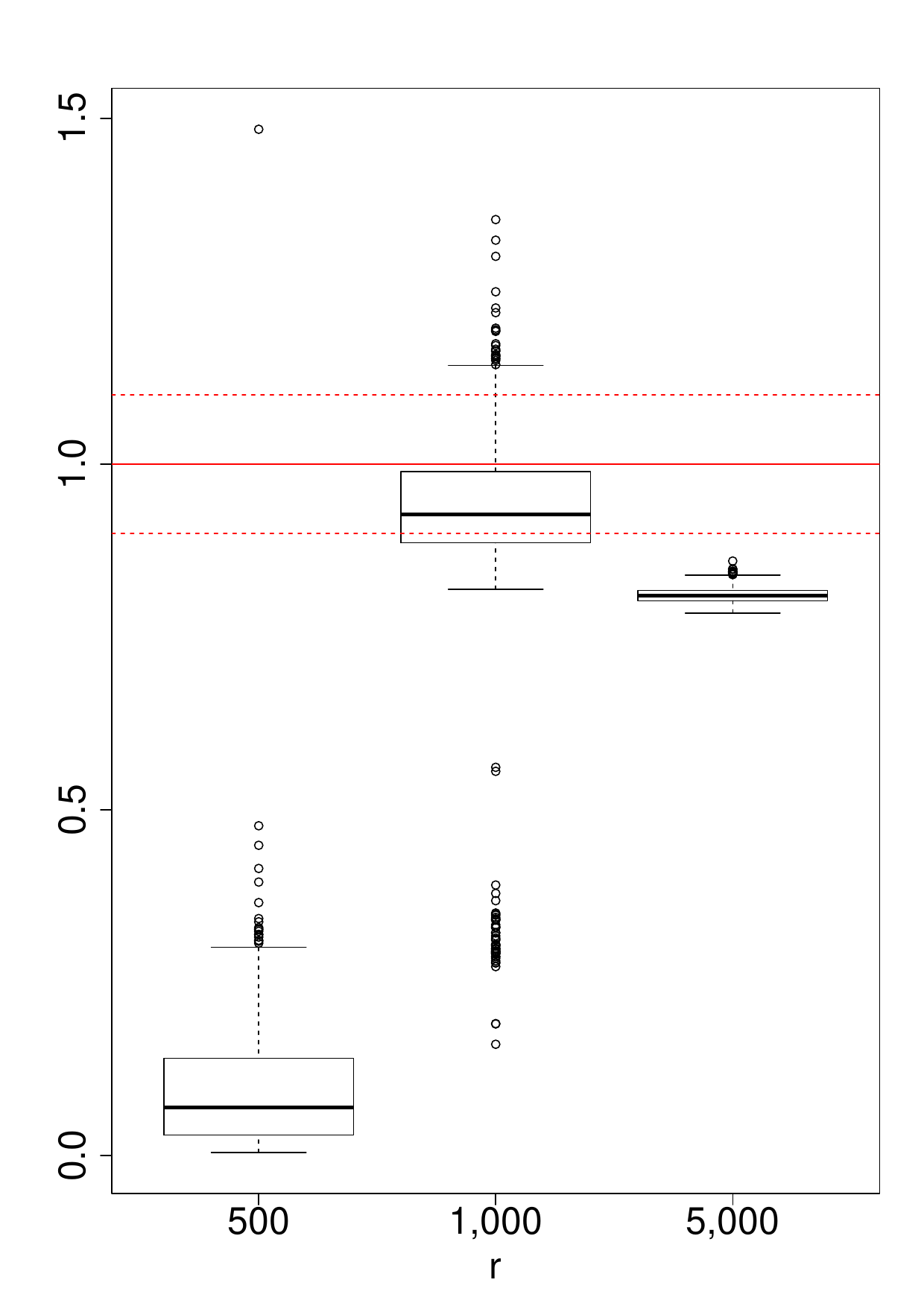}
	\caption{Sampling distribution of $\hat m_0/m_0$, with
		$\hat m_0$ denoting the estimator of the optimal subsample
		size, $m_0$, as defined in Section $4$ of the main paper,
		for densities D1 (left panel) and D2 (right panel). The
		values chosen for the parameters of the estimator were $s
		= 50$ and, from left to right, $r \in \{500, 1,000,
		5,000\}$.
		Red dotted lines are plotted
at values 0.9 and 1.1 for reference.
		\label{img:mhat_supp}} 
\end{figure}

\section*{Appendix 3. Real data example}
\label{sec:apli}
To further explore the performance of the bagged bandwidth selector, we considered the public dataset ``On-Time: Reporting Carrier On-Time Performance'' corresponding to the year $2017$, available at 
\url{https://www.transtats.bts.gov/Fields.asp}. 
In particular, we were interested in the variable \texttt{ArrDelay}, which measures the difference in minutes between scheduled and actual arrival time. Early arrivals show negative numbers. Due to the fact that the dataset contains many ties and in order to avoid problems when performing cross-validation, we decided to remove the ties by jittering the data. In particular, we worked with the sample of size $n = 5,579,346$ which results from adding a random sample of size $n$, drawn from a continuous uniform distribution defined on the interval $(-0.5, 0.5)$, to the original dataset.

To estimate the optimal subsample size, $m_0$, for the bagged bandwidth, we used the procedure described in Section $4$ of the main paper, considering $N = 100$ subsamples. In particular, using $r = 1,000$ and $s = 500$, yielded the estimate $\hat m_0 = 272,222$. The process of estimating $m_0$ with those parameters took $32$ seconds. The estimated bagged bandwidth with these values of $m$ and $N$ was $\hat h(m=272,222, N=100) = 0.490$. Its calculation took $63$ seconds. The calculation of both $\hat m_0$ and $\hat h(m, N)$ were executed in parallel on an Intel Core i5-8600K 3.6GHz. Figure \ref{img:density_hbag_bwucv} shows the kernel density estimates obtained when considering the bagged bandwidth $h = \hat h(\hat m_0, N=100)=0.490 $ and the bandwidth produced by the R function \texttt{bw.ucv}, using the same number of bins and search interval as in the case of the bagged bandwidth, that is, $h =$ \texttt{bw.ucv($\cdot$, nb=1e5, lower=0.01, upper=1)}, that returned the value 0.01039. As we can see, even with those parameters, \texttt{bw.ucv} basically returns the lower bound of the search interval thus producing a heavily undersmoothed estimate of the underlying density.

\begin{figure}[h]
	\includegraphics[width=0.5\textwidth]{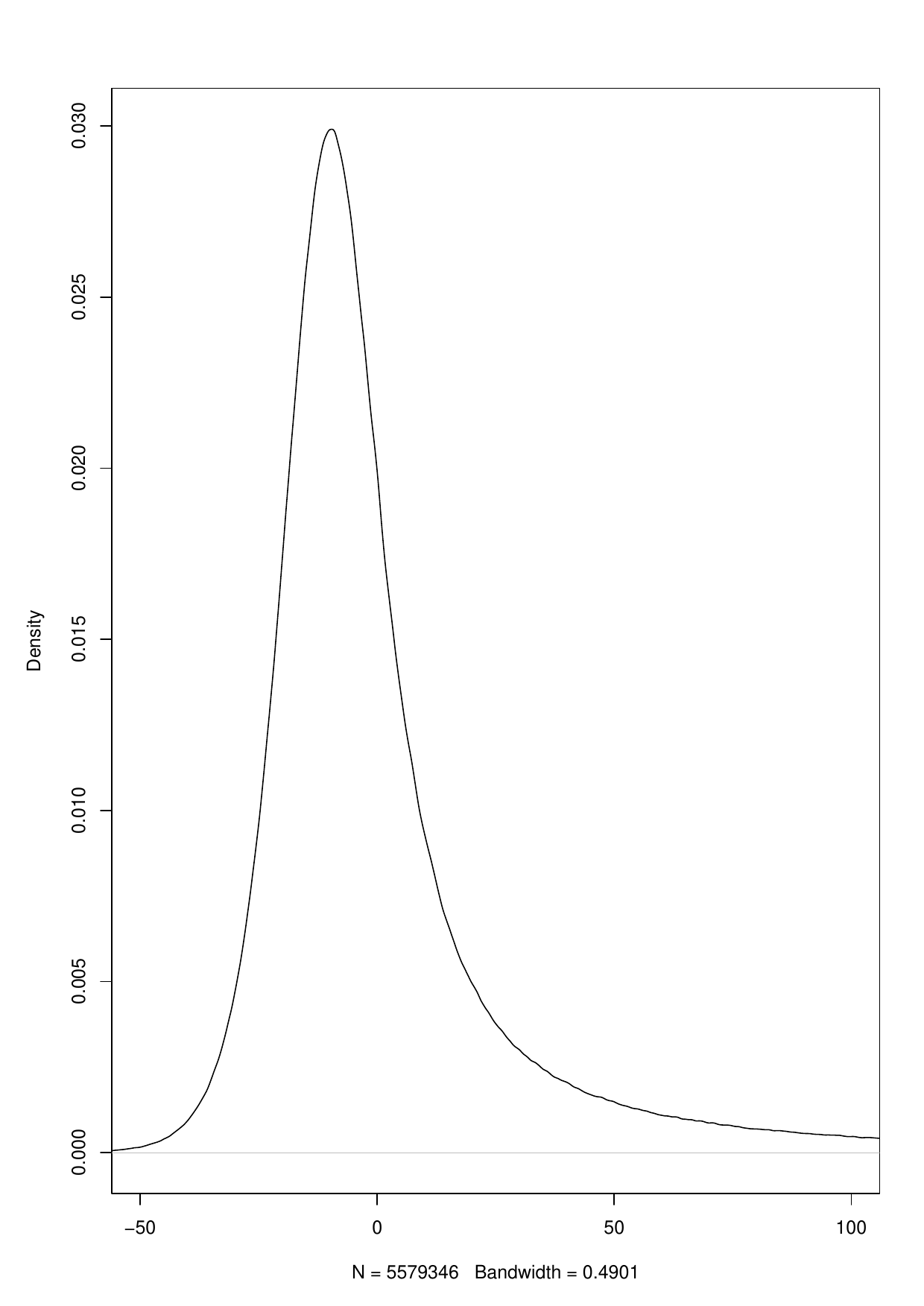}
	\includegraphics[width=0.5\textwidth]{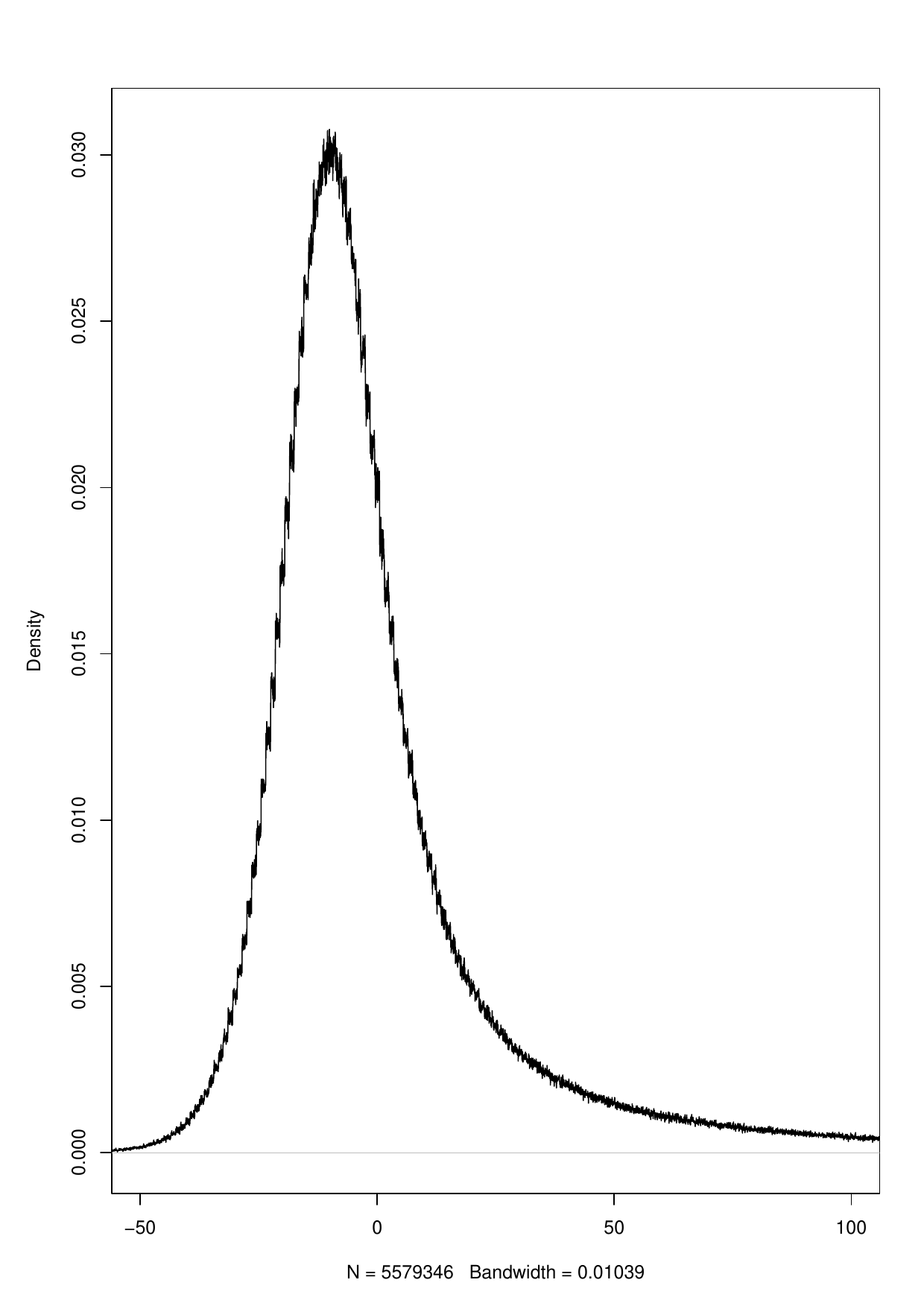}
	\caption{Kernel density estimates with bandwidths $h = \hat h(\hat m_0, N=100)$ (left) and $h =$ \texttt{bw.ucv($\cdot$, nb=1e5, lower=0.01, upper=1)} (right).}
	\label{img:density_hbag_bwucv} 
\end{figure}

Computing the leave-one-out cross-validation bandwidth for the
  whole sample is prohibitive due to the huge amount of time it would
  require. Even with a binned implementation, as employed in the R
  function \texttt{bw.ucv}, the computing time would be very high, as
  highlighted in Section $2$. In order for this function to produce accurate
  results the number of bins must be very close to $n$. Therefore, to
  predict the value of the cross-validation bandwidth for the original
  sample size, $n$, and also the time required for its computation, we
  used appropriate regression models. We repeated these experiments
  considering binned and non-binned cross-validation bandwidths. The
  predicted cross-validation bandwidth for the whole
  sample is practically identical whether or not one uses binning, with a
  large enough number of bins, and hence we just describe the
  experiment when using a binned implementation. Nevertheless, the
  predicted time is obviously much higher when binning is not used, as
  we will see later. Specifically, we selected 100 subsamples of
  sizes $557$, $5,579$ and $55,793$ from the whole dataset. For each
  size and subsample, we computed the binned version of the
  leave-one-out cross-validation bandwidth, using the R function
  \texttt{bw.ucv} with \texttt{nb}, number of bins, equal to the
  corresponding sample size (see Figure
  \ref{img:box_hcv_real}). Finally, we considered the parametric
  regression model: 
\beq\label{reg_mod}
Y_i = \beta_0 n_i^{\beta_1},
\eeq
where $n_i \in \{557,\, 5,579,\, 55,793\}$ and $Y_i \in \{3.606, 2.129, 1.352\}$ denotes the mean of the binned cross-validation bandwidths using the subsamples of size $n_i$. Taking logarithms in \eqref{reg_mod}, we get a linearized version of \eqref{reg_mod},
\beq\label{reg_mod_log}
\log Y_i = \log \beta_0 + \beta_1 \log n_i,
\eeq
which we can see as a linear regression model with parameters $\log \beta_0$, intercept, and $\beta_1$, slope. Applying least squares, we obtained the following estimates for the parameters of model \eqref{reg_mod}:
\beqn
\hat \beta_0 &=& 13.69,\\
\hat \beta_1 &=& -0.213.
\eeqn

With these values of $\hat \beta_0$ and $\hat \beta_1$, the predicted value of the leave-one-out cross-validation bandwidth for the original sample size is $\hat h_{n} = 0.501$, very close to the value produced by the bagged approach, $\hat h(m=272,222, N=100) = 0.490$. Figure \ref{img:hcv_reg_fit} shows the fitted values for the nonlinear model defined in \eqref{reg_mod}. Analogously, we considered a model similar to the one described in \eqref{reg_mod} to predict the time required to compute a binned version of the ordinary cross-validation bandwidth for the original sample. Fitted values for this model are shown in Figure \ref{img:fitted_values_tmod}. As previosuly, we employed the R function \texttt{bw.ucv} with \texttt{nb} equal to the corresponding sample size to compute the different cross-validation bandwidths. In this case and using the same notation as in \eqref{reg_mod}, we considered $n_i = \{5,579,\, 55,793,\, 557,934\}$ and $Y_i = \{0.0102, 0.959, 103.08\}$, with $Y_i$ now denoting the elapsed time (in seconds) needed to compute \texttt{bw.ucv($\cdot$, nb=$n_i$)}, that is, the binned cross-validation bandwidth for a sample of size $n_i$ with the number of bins set to $n_i$. Again, using the same notation as in \eqref{reg_mod}, we obtained the following estimates for the model parameters:
\beqn
\hat \beta_0 &=& 3.14 \times 10^{-10},\\
\hat \beta_1 &=& 2.002.
\eeqn

This means that the time needed to compute the binned cross-validation bandwidth for the original sample is predicted to be approximately $2.8$ hours. Analogously, we repeated the experiment to predict the time required to compute a non-binned leave-one-out cross-validation bandwidth for the whole sample and this predicted time turned out to be $5.1$ years. Fitted values for the model are shown in Figure \ref{img:fitted_values_tmod_nonbinned}.

\begin{figure}[h]
	\includegraphics{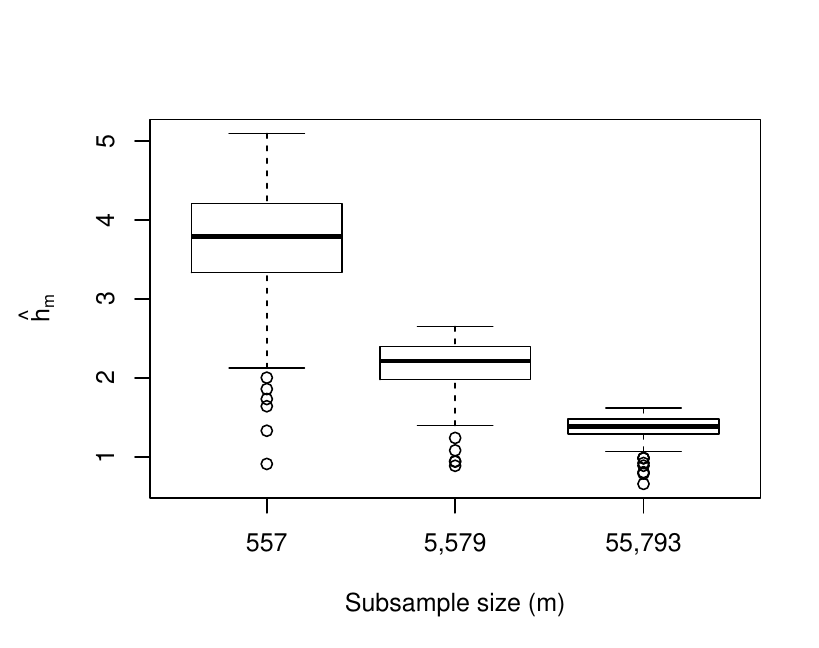}
	\caption{Box-plots of $\hat h_{m}$ for subsamples of size $m \in \{557,\, 5,579,\, 55,793\}$.}\label{img:box_hcv_real} 
\end{figure}

\begin{figure}[h]
	\includegraphics{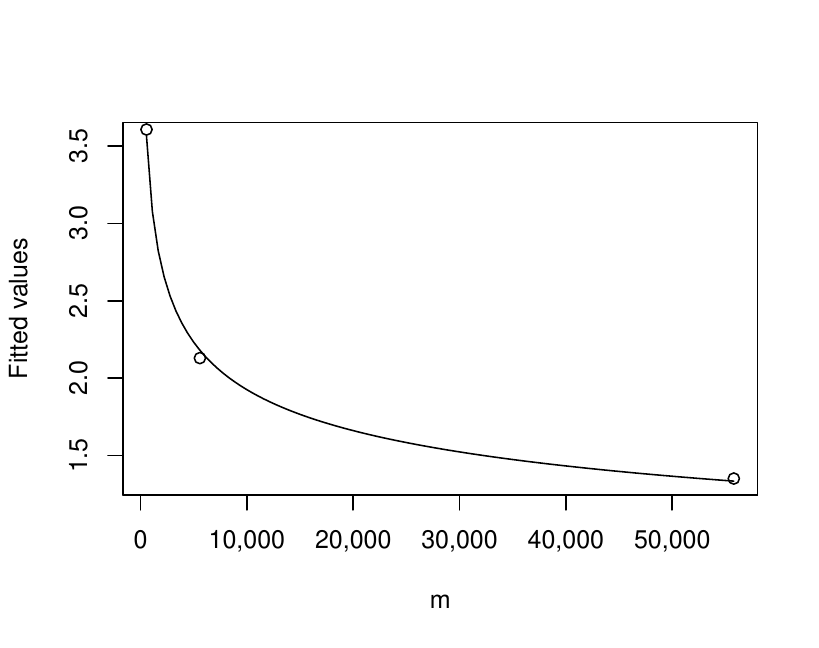}
	\caption{Fitted values for the regression model defined in \eqref{reg_mod}. White dots correspond to the observations used to fit the model.}
	 \label{img:hcv_reg_fit} 
\end{figure}

\begin{figure}[h]
	\includegraphics{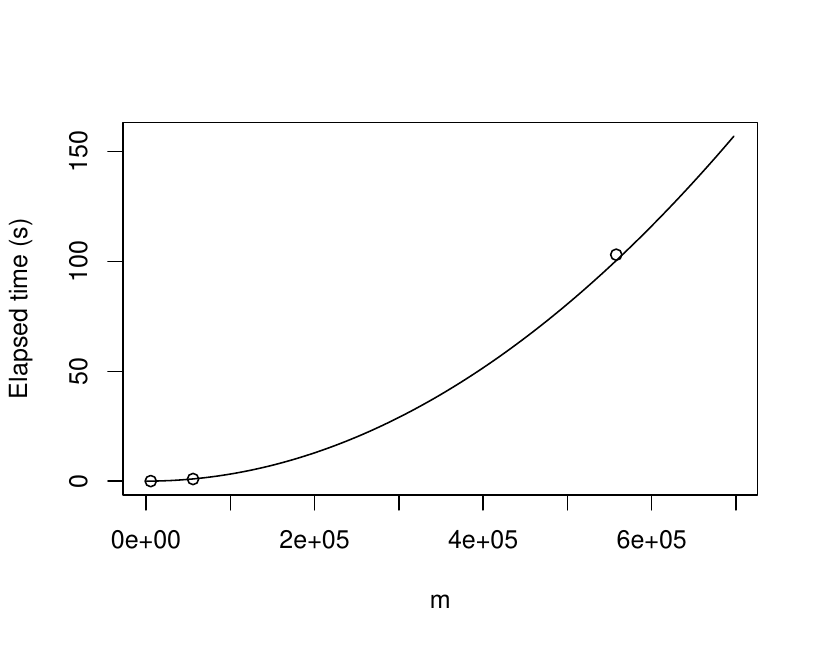}
	\caption{Fitted values for the regression model that relates the elapsed time needed to compute the binned cross-validation bandwidth to the sample size. White dots correspond to the observations used to fit the model.}
	\label{img:fitted_values_tmod}
\end{figure}

\begin{figure}[h]
	\includegraphics{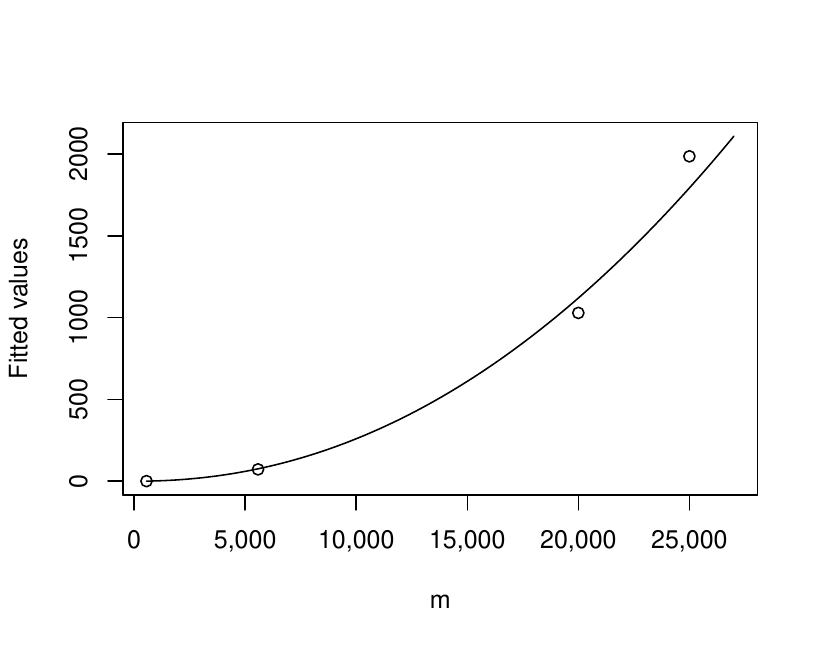}
	\caption{Fitted values for the regression model that relates the elapsed time needed to compute the standard non-binned cross-validation bandwidth to the sample size. White dots correspond to the observations used to fit the model.}
	\label{img:fitted_values_tmod_nonbinned} 
\end{figure}

\section*{Appendix 4. Variance of the bagged bandwidth (Hall and Robinson, 2009)}
In this section, we show that the variance approximation of the
  bagged bandwidth studied in \cite{HR} is in error. We also
  provide  the correct expression for this variance and the
  corresponding proof. 
The bagged bandwidth studied in \cite{HR}, $\hat h_{bagg}$,
corresponds to the case where $N = \infty$ in the smoothing
  parameter (3) of the main paper, that is, $\hat h_{bagg} = \hat
h(m,\infty)$, following the notation adopted. From equation (7) in the main paper, it follows that 
\beq\label{eq:var} 
\var\left(\hat h_{bagg}\right) = AC^2 m^{9/5}n^{-12/5}+o\left(m^{9/5}n^{-12/5}\right),
\eeq
which exactly matches the second term given in equation (13) of
\cite{HR}. However, it is claimed in that paper that the
dominant term is of order $m^{4/5}n^{-7/5}$. We will prove that this
last statement is wrong and that, in fact, the dominant term is
precisely the one given in \eqref{eq:var}. 

It can be easily proved that for a sample of size $n$ we have
\beq\label{eq:CV}
CV(h) = M(h)-R(f)+S(h),
\eeq
where, as previously, $M(h)$ denotes the MISE function of
  the Parzen--Rosenblatt kernel density estimator for a sample of
size $n$, and $S(h) = S_1(h)+S_2(h)$ is defined on p.~$184$ of
\cite{HR}. From \eqref{eq:CV} it follows that, for any $r \in
\mathbb{N}$, 
\beqn
\var\left\{CV^{(r)}(h)\right\} = \var\left\{S^{(r)}(h)\right\}.
\eeqn

More importantly, finding the asymptotic variance of the cross-validation bandwidth, whether bagged or ordinary, boils down to finding $\var\left\{S'(h)\right\}$. As stated in equation (\ref{cvr}) in Section 
\ref{sec:teo} 
of this document,  for any $r \geq 1$,
\beqn
CV^{(r)}(h) = M^{(r)}(h)+\frac{1}{n(n-1)}\sum\limits_{i \neq j}\bar \gamma_{nh}^{(r)}(X_i-X_j),
\eeqn
where 
$$\bar \gamma_{nh}^{(r)}(u) = \gamma_{nh}^{(r)}(u)-E\left\{\gamma_{nh}^{(r)}(X_1-X_2)\right\},$$
 $$\gamma_{nh}^{(r)}(u) = \frac{{\rm d}^r\gamma_{nh}(u)}{{\rm d}h^r},$$
 $$\gamma_{nh}(u) = \gamma_n(u/h)/h$$ 
 and 
 $$\gamma_n(u) = \frac{n-1}{n}K*K(u)-2K(u).$$ 
Therefore,
\beq\label{eq:varSprime}
\var\left\{S'(h)\right\} = \frac{1}{n^4h^2}\var\left\{\sum\limits_{i \neq j}H(X_i-X_j)\right\},
\eeq
where 
$$H(u) = \gamma_{e,h}(u)+u(\gamma_{e,h})'(u),$$
$$\gamma_{e,h}(u) = \gamma_e(u/h)/h$$ 
and 
$$\gamma_e(u) = \frac{n}{n-1}\gamma_n(u).$$
Let us now define $\tilde H(u) = \gamma_e(u)+u\gamma_e'(u)$, so we have that $H(u) = \tilde H_h(u)$. Standard algebra gives
\beq\label{eq:varsum}
\var\left\{\sum\limits_{i \neq j}H(X_i-X_j)\right\} = 4n(n-1)(n-2)C_b+2n(n-1)C_c,
\eeq
where $C_b = \cov\left\{H(X_1-X_2), H(X_1-X_3)\right\}$ and $C_c = \var\left\{H(X_1-X_2)\right\}$. These terms can be further decomposed into 
\beq\label{eq:Cb}
C_b = C_{b1}-C_{b2}^2
\eeq
and
\beq\label{eq:Cc}
C_c = C_{c1}-C_{b2}^2,
\eeq
where
\beqn
C_{b1} &=& \int H*f(x)^2f(x)\, {\rm d}x,\\
C_{c1} &=& \int H^2*f(x)f(x)\, {\rm d}x,\\
C_{b2} &=& \int H*f(x)f(x)\, {\rm d}x.
\eeqn

Using the facts that $\tilde H$ is symmetric, $\mu_0\left(\tilde H\right) = 0$, $\mu_2\left(\tilde H\right) = 4\mu_2(K)/(n-1)$, and $\mu_4\left(\tilde H\right) = \mu_6\left(\tilde H\right) = O\left(1\right)$, we have
\beqn
C_{b2} &=& \int\int \frac{1}{h}\tilde H\left(\frac{x-y}{h}\right)f(y)f(x)\, {\rm d}x\,{\rm d}y = \int\int \tilde H(u)f(x-hu)f(x)\, {\rm d}x\, \rm{d}u\\
&=& \int\int \tilde H(u)\left\{f(x)-huf'(x)+\cdots-\frac{h^5u^5}{5!}f^{(5)}(x)+\frac{h^6u^6}{6!}f^{(6)}(\tilde x)\right\}f(x)\, {\rm d}x\,{\rm d}u\\
&=& \int f(x)\left\{\frac{h^2}{2}\mu_2\left(\tilde H\right)f''(x)+\frac{h^4}{4!}\mu_4\left(\tilde H\right)f^{(4)}(x)+O\left(h^6\right)\right\}\, {\rm d}x\\
&=& \frac{1}{4}\mu_2(K)^2R(f'')h^4+O\left(h^6\right),
\eeqn
and, therefore,
\beq\label{eq:Cb22}
C_{b2}^2 = \frac{1}{16}\mu_2(K)^4R(f'')^2 h^8+O\left(h^{10}\right).
\eeq

For the term $C_{b1}$,
\beq\label{eq:Cb1}
C_{b1} &=& \int f(x)\left\{\int \frac{1}{h}\tilde H\left(\frac{x-y}{h}\right)f(y)\, dy\right\}^2\, {\rm d}x = \int f(x)\left\{\int \tilde H(u)f(x-hu)\, du\right\}^2\, {\rm d}x\nonumber \\
&=& \int f(x)\left\{\frac{1}{4}\mu_2(K)^2f^{(4)}(x)h^4+O\left(h^6\right)\right\}^2\, {\rm d}x \nonumber \\ 
&=& \int f(x)\left\{\frac{1}{16}\mu_2(K)^4f^{(4)}(x)^2h^8+O\left(h^{10}\right)\right\}\, {\rm d}x \nonumber \\
&=& \frac{1}{16}\mu_2(K)^4 J_1 h^8+O\left(h^{10}\right),
\eeq
where $$J_1 = \int f^{(4)}(x)^2f(x)\, {\rm d}x.$$

The term $C_{b1}$ can be handled in a similar way
\beq\label{eq:Cc1}
C_{c1} &=& \frac{1}{h^2}\int\int \tilde H\left(\frac{x-y}{h}\right)^2 f(y)f(x)\, {\rm d}x\,{\rm d}y = \frac{1}{h}\int\int \tilde H(u)^2f(x-hu)f(x)\,{\rm d}x\,{\rm d}u\nonumber\\
&=& \frac{1}{h}\int\int \tilde H(u)^2f(x)\left\{f(x)-huf'(x)+\frac{h^2u^2}{2}f''(\tilde x)\right\}\,{\rm d}x\,{\rm d}u\nonumber\\
&=& \frac{R(f)R\left(\tilde H\right)}{h}+O\left(h\right).
\eeq

Plugging \eqref{eq:Cb22}, \eqref{eq:Cb1} and \eqref{eq:Cc1} into \eqref{eq:Cb} and \eqref{eq:Cc} yields, respectively,
\beq\label{eq:Cbfinal}
C_b = \frac{1}{16}\mu_2(K)^4\left\{J_1-R(f'')^2\right\}h^8+O\left(h^{10}\right),
\eeq
\beq\label{eq:Ccfinal}
C_c = \frac{R(f)R\left(\tilde H\right)}{h}+O\left(h\right).
\eeq

Now, plugging \eqref{eq:Cbfinal} and \eqref{eq:Ccfinal} into \eqref{eq:varsum} and then into \eqref{eq:varSprime}, and using the fact that $2R\left(f\right)R\left(\tilde H\right) = A_3+O\left(n^{-1}\right)$, we get
\beq\label{eq:varSprimefinal}
\var\left\{S'(h)\right\} = A_3\frac{1}{n^2h^3}+O\left(\frac{1}{n^2h}\right),
\eeq
where $A_3$ is defined on p.~183 of \cite{HR}. Equation
\eqref{eq:varSprimefinal} is completely consistent with the results
obtained in \cite{HallMarron} and
\cite{scottterrell}. Now, taking variance in equation (A2)
  of \cite{HR} and plugging \eqref{eq:varSprimefinal} into
that expression yields \eqref{eq:var}. Equation
\eqref{eq:varSprimefinal} is enough to show that expression (A3) of
\cite{HR} is wrong, which in turn explains the error in
their equation (13) regarding the variance of the bagged
bandwidth. Nonetheless, we will provide an asymptotic expression for
$\var\left\{S_1'(h)\right\}$, since that is where the error in
\cite{HR} comes from. 

From the definition of $V_{nh}(X_i)$ and $S_1(h)$ given on p. 184 of
\cite{HR}, it is easy to show that 
\beqn
V_{nh}(X_1) = \left(1-n^{-1}\right)\tilde z_1^{(h)}-\tilde T_1^{(h)},
\eeqn
where
\beqn
\tilde z_1^{(h)} = K_h*K_h*f(X_1)-\int K_h*f(x)^2\,{\rm d}x
\eeqn
and
\beqn
\tilde T_1^{(h)} = 2\left\{K_h*f(X_1)-\int K_h*f(x)f(x)\,{\rm d}x\right\}.
\eeqn

Let us define the functions $\nu$ and $\eta$, where
\beqn
\nu(x) = K(x)+xK(x)
\eeqn
and
\beqn
\eta(x) = K*K(x)+x(K*K)'(x).
\eeqn
Then, we have that
\beqn
\frac{{\rm d} \tilde T_1^{(h)}}{{\rm d}h} = -\frac{1}{h}\nu_h*f(X_1)
\eeqn
and
\beqn
\frac{{\rm d} \tilde z_1^{(h)}}{{\rm d}h} = -\frac{1}{h}\left[\eta_h*f(X_1)-\E\left\{\eta_h*f(X_1)\right\}\right].
\eeqn

Therefore,
\beqn
\frac{{\rm d} V_{nh}(X_1)}{{\rm d}h} = \frac{1}{h}\left[\tau_h*f(X_1)-\E\left\{\tau_h*f(X_1)\right\}\right],
\eeqn
where
\beqn
\tau(x) = 2K(x)+2xK'(x)-\frac{n-1}{n}\left\{K*K(x)+x(K*K)'(x)\right\}.
\eeqn

We have that
\beqn
\var\left\{\frac{{\rm d} V_{nh}(X_1)}{{\rm d}h}\right\} = \frac{1}{h}\left[\E\left\{\tau_h*f(X_1)^2\right\}-\E\left\{\tau_h*f(X_1)\right\}^2\right].
\eeqn

It is easy to show that
\beqn
\mu_0(\tau) &=& 0,\\
\mu_2(\tau) &=& -\frac{4}{n}\mu_2(K),\\
\mu_4(\tau) &=& -\frac{8}{n}\mu_4(K)+24\frac{n-1}{n}\mu_2(K)^2,\\
\mu_6(\tau) &=& -\frac{12}{n}\mu_6(K)+180\frac{n-1}{n}\mu_2(K)\mu_4(K).
\eeqn

Using standard calculations, one can see that
\beqn
\E\left\{\tau_h*f(X_1)\right\} &=& \int\int \tau(u)f(x)f(x-hu)\,{\rm d}x\,{\rm d}u\\
&=& \int\int \tau(u)f(x)\left\{f(x)-huf'(x)+\cdots-\frac{h^7u^7}{7!}f^{(7)}(x)+\frac{h^8u^8}{8!}f^{(8)}(\tilde x)\right\}\,{\rm d}x\,{\rm d}u\\
&=& -\frac{h^2}{2}\mu_2(\tau)R(f')+\frac{h^4}{24}\mu_4(\tau)R(f'')-\frac{h^6}{6!}\mu_6(\tau)R(f''')+O\left(h^8\right).
\eeqn

Therefore,
\beqn
\E\left\{\tau_h*f(X_1)\right\}^2 &=& \frac{h^4}{4}\mu_2(\tau)^2R(f')^2-\frac{h^6}{24}\mu_2(\tau)\mu_4(\tau)R(f')R(f'')\\
&+&\frac{h^8}{24^2}\mu_4(\tau)^2R(f'')^2+\frac{h^8}{6!}\mu_2(\tau)\mu_6(\tau)R(f')R(f''')+O\left(h^{10}\right).
\eeqn

On the other hand,
\beqn
\E\left\{\tau_h*f(X_1)^2\right\} &=& \int\int\int \tau(u)f(x-hu)\tau(v)f(x-hv)f(x)\,{\rm d}x\,{\rm d}u\,{\rm d}v\\
&=& \int\int\int \tau(u)\tau(v)f(x)\left\{f(x)-huf'(x)+\cdots+O\left(h^{10}\right)\right\}\\
&&\left\{f(x)-hvf'(x)+\cdots+O\left(h^{10}\right)\right\}\,{\rm d}x\,{\rm d}u\,{\rm d}v\\
&=& \frac{h^4}{4}\mu_2(\tau)^2J_2+\frac{h^6}{24}\mu_2(\tau)\mu_4(\tau)J_3+\frac{h^8}{6!}\mu_2(\tau)\mu_6(\tau)J_4+\frac{h^8}{24^2}\mu_4(\tau)^2J_1+O\left(h^{10}\right),
\eeqn
where
\beqn
J_2 &=& \int f(x)f''(x)^2\,{\rm d}x,\\
J_3 &=& \int f(x)f''(x)f^{(4)}(x)\,{\rm d}x,\\
J_4 &=& \int f(x)f''(x)f^{(6)}(x)\,{\rm d}x.
\eeqn

So, we have that
\beqn
\var\left\{\frac{{\rm d} V_{nh}(X_1)}{{\rm d}h}\right\} &=& \frac{h^2}{4}\mu_2(\tau)^2\left\{J_2-R(f')^2\right\}+\frac{h^4}{24}\mu_2(\tau)\mu_4(\tau)\left\{J_3+R(f')R(f'')\right\}\\
&+&\frac{h^6}{6!}\mu_2(\tau)\mu_6(\tau)\left\{J_4-R(f')R(f''')\right\}\\
&+&\frac{h^6}{24^2}\mu_4(\tau)^2\left\{J_1-R(f'')^2\right\}+O\left(h^8\right).
\eeqn

Finally, since
\beqn
\var\left\{S_1'(h)\right\} = \frac{4}{n}\var\left\{\frac{{\rm d} V_{nh}(X_1)}{{\rm d}h}\right\},
\eeqn
it follows that
\beqn
\var\left\{S_1'(h)\right\} = 4\mu_2(K)^4\left\{J_1-R(f'')^2\right\}\frac{h^6}{n}+O\left(\frac{h^8}{n}\right).
\eeqn

This, in conjunction with \eqref{eq:varSprimefinal}, proves that $\var\left\{S_1'(h)\right\}$ is negligible with respect to $\var\left\{S_2'(h)\right\}$ and, in particular, that $\var\left\{S_1'(h)\right\}$ cannot be asymptotic to $A_2\,h^2/n$ as claimed in \cite{HR}.

 

\end{document}